\newtheorem{theorem}{Theorem}[section]
\newtheorem{lemma}[theorem]{Lemma}
\theoremstyle{definition}
\newtheorem{definition}[theorem]{Definition}
\newtheorem{proposition}[theorem]{Proposition}
\theoremstyle{remark}
\newtheorem{remark}[theorem]{Remark}
\numberwithin{equation}{section}
\newcommand{\abs}[1]{\lvert#1\rvert}
\newcommand{\HI}{\mathfrak{H}}
\newcommand{\R}{\mathbb{R}}
\newcommand{\C}{\mathbb{C}}
\newcommand{\HQ}{\mathbb{H}}
\newcommand{\oz}{\overline{z}}
\newcommand{\qu}{\mathfrak{q}}
\newcommand{\oqu}{\overline{\mathfrak{q}}}
\newcommand{\fn}{\mathfrak{n}}
\newcommand{\Ps}{{\mathbf{P}}^2(\R)}
\newcommand{\Cn}{\C_{\mathfrak{n}}}
\newcommand{\Hr}{\mathfrak H_{\text{reg}}}
\newcommand{\Hrt}{\mathfrak H_{\theta, \phi}^{\text{reg}}}
\begin{document}
\title[Quaternionic coherent states]{Coherent states on quaternion Slices and a measurable field of Hilbert spaces}
\author{B. Muraleetharan$^{\star}$, K. Thirulogasanthar$^{\dagger}$ }
\address{$^{\star}$ Department of Mathematics and Statistics, University of Jaffna, Thirunelveli, Jaffna, Srilanka. }
\address{$^{\dagger}$ Department of Computer science and Software engineering, Concordia University, 1455 de Maisonneuve Blvd. West, Montreal, Quebec, H3G 1M8, Canada.}

\thanks{ KT would like to thank University of Jaffna for their hospitality, part of this work was done while he was visiting there. }
\subjclass{Primary 81R30, 46E22}
\date{\today}
\keywords{Quaternion, Coherent state, Reproducing kernel, Group representation, Direct integral, Hermite polynomials.}
\dedicatory{Dedicated to the memory of S. Twareque Ali}

\begin{abstract}
A set of reproducing kernel Hilbert spaces are obtained on Hilbert spaces over quaternion slices with the aid of coherent states. It is proved that the so obtained set forms a measurable field of Hilbert spaces and their direct integral appears again as a reproducing kernel Hilbert space for a bigger Hilbert space over the whole quaternions.  Hilbert spaces over quaternion slices are identified as representation spaces for a set of irreducible unitary group representations and their direct integral is shown to be a reducible representation for the Hilbert space over the whole quaternion field. 
\end{abstract}
\maketitle
\pagestyle{myheadings}
\section{Introduction}\label{sec_intro}
The notion of direct integrals was introduced in 1949 by von Neumann in one of his papers in the series {\em On ring of operators}  \cite{VN}, and later, the theory is customarily developed for direct integrals of Hilbert spaces and direct integrals of von Neumann algebras. In fact, direct integrals provides a natural way in the theory of decomposition of von Neumann algebras of operators   and such decompositions play a natural role in mathematical physics problems \cite{Alibk,WW, VW,R}. Direct integral theory was also used by Mackey in his analysis of systems of imprimitivity and in his general theory of induced representations \cite{M}.  Later, following the method of Mackey, in \cite{CT}, quantum mechanics in quaternionic Hilbert spaces was examined along the lines of quaternionic  imprimitivity theorem. In the application point of view, to name a few, based on direct integrals a framework allowing to integrate a parameterized family of reproducing kernels with respect to some measure on the parameter space was developed in \cite{HT} and then it is applied to the so-called Mercer kernels, Kramer sampling, etc. Making use of the direct integral of Hilbert spaces of some nucleus, the approximate dynamical dependence of physical systems on the external parameters was examined in \cite{BG}.\\

In the complex theory, reproducing kernels and Hilbert spaces possessing such kernels are at the core of the theory of coherent states (CS). Whenever we have a family of CS, there is an associated reproducing kernel Hilbert space, and the basis ingredients in constructing such a space are either positive operator valued measures or measurable families of Hilbert spaces. A reproducing kernel, and an associated Hilbert space, can always be defined whenever we are given a positive-definite kernel on a measurable family of Hilbert spaces. The resulting reproducing kernel Hilbert space, while necessarily a space of functions, does not have to be an $L^2$-space. The possibility of embedding it into an $L^2$-type of space requires, in addition, the existence of a resolution of the identity type relation. For a detail argument along these lines one can consult \cite{Alibk}. Recently, in \cite{HP}, using the concept of measurable field of complex Hilbert spaces and their direct integrals it is proved that the formalism of generalized CS leads to a useful characterization of extremal positive operator valued measures.\\

The quaternion field, $\HQ$ can be divided into a family of quaternion slices along a projective plane $\Ps$  and each slice is isomorphic to the complex plane \cite{Gra1}. We consider left quaternion Hilbert space over each slice and thereby obtain a set of Hilbert spaces,
$\displaystyle\mathbb{A}=\left\{V_{\fn}^L~\vert~\mathfrak{n}\in\Ps\right\}$. By following the general procedure outlined in \cite{ThiAli} we obtain CS and thereby reproducing kernel and a reproducing kernel Hilbert space on each element of $\mathbb{A}$. With the aid of the resolution of the identity we also embed each reproducing kernel Hilbert space to a slice-valued $L^2$ space and thereby identify orthonormal dense subset for each reproducing kernel Hilbert space. In doing so we attain a set of Bargmann type reproducing kernel Hilbert spaces,
$\mathfrak{F}=\displaystyle\left\{\HI_{K}^{\fn}~~~\vert~~\fn\in\Ps\right\}$. In \cite{Thi, ThiAli} by applying the same procedure a Bargmann type reproducing kernel Hilbert space, $\HI_K^\HQ$ on a left quaternion Hilbert space $V_\HQ^L$, which is a Hilbert space over the whole quaternion field $\HQ$, was obtained. We shall prove that the set $\mathfrak{F}$ forms a measurable field of Hilbert spaces. Further, the connection between the set $\mathfrak{F}$ and the Hilbert space $\HI_K^\HQ$ appeared to be a direct integral over $\Ps$.
It is well-known that the set of standard complex canonical CS is associated with an irreducible unitary group representation of the Weyl-Heisenberg group. But this is not the case for quaternions,  in \cite{AdMil, Thi} it is proved that an irreducible unitary group representation cannot be allied with the quaternion CS over the representation space $V_\HQ^L$ due to the noncommutativity of quaternions. However, in this article we shall show that the CS labeled by a quaternion slice, considered as vectors in the space $V_{\fn}^L$, can be associated to an irreducible unitary group representation of the Weyl-Heisenberg  type. Further, we shall also demonstrate that the direct integral of the representations so obtained on the slices forms a decomposable operator, and which is a reducible representation for the representation space $V_\HQ^L$.\\

The article is structured as follows. In section 2 we introduce some basic facts about  quaternion slices, a particular fibration of $\R^4$ and identify a measure on it to work with. In section 3 we study some direct integral Hilbert spaces. Section 4 is devoted to the construction of CS, reproducing kernels, reproducing kernel Hilbert spaces and their bases. In section 5 we prove that the set of reproducing kernel Hilbert spaces obtained in section 4 forms a measurable field of Hilbert spaces and also obtain their direct integral. Section 6 deals with group representations over quaternionic Hilbert spaces. Section 7 ends the manuscript with a conclusion.
\section{Mathematical Preliminaries}
\subsection{Quaternions}
Let $\HQ$ denote the field of quaternions. Its elements are of the form $\qu=x_0+x_1i+x_2j+x_3k$ where $x_0,x_1,x_2$ and $x_3$ are real numbers, and $i,j,k$ are imaginary units such that $i^2=j^2=k^2=-1$, $ij=-ji=k$, $jk=-kj=i$ and $ki=-ik=j$. The quaternionic conjugate of $\qu$ is defined to be $\overline{\qu} = x_0 - x_1i - x_2j - x_3k$ and $|\qu|^2=\qu\overline{\qu}=\oqu\qu=x_0^2+x_1^2+x_2^2+x_3^2$ defines a real norm on $\HQ$. 
\subsection{A particular fibration of $\mathbb R^4$ and a coordinatization}\label{sec-coord-fib}
We introduce a particular coordinatization of $\mathbb R^4$ and an associated fibration, which will help us to understand better the geometry of the field of quaternions and the complex planes contained in it. It is well-known that given a quaternion $\mathfrak q$, there exist $x, y \in \mathbb R$, with $y \geq 0$ and  a unit imaginary quaternion $\mathfrak n$ (i.e., with $\mathfrak n^2 = -1$) such that
\begin{equation}
   \mathfrak q = x + \mathfrak n y , \qquad y \geq 0,
\label{q-complex-rep}
\end{equation}
and moreover, the quantities x, y and $\mathfrak n$ are uniquely determined \cite{Col, Gra1}. On the other hand, for a given $\mathfrak n$, the set of all quaternions of the type
$$
   \mathfrak q = x + \mathfrak n y,\qquad x, y \in \mathbb R,
$$
can be identified with a complex plane, which we denote by $\mathbb C_{\mathfrak n}$. Next, the set of all unit imaginary quaternions is identifiable with the  surface of the unit sphere, $S^2$, in $\mathbb R^3$. If we identify antipodal points of this sphere, we get the projective plane $\mathbf{P}^2(\mathbb R)$, which consists of three disjoint sets: the surface of the hemisphere, without the boundary rim, which we denote by $\frac 12 S^2$, a semicircle, without the two endpoints, denoted $\frac 12 S^1$ and a last single point (given by the identification of the endpoints of the semicircle), i.e., $\mathbf{P}^2(\mathbb R) \simeq \frac 12 S^2 \cup \frac 12 S^1 \cup \{\text{point}\}$. We shall continue to write $\mathfrak n$ for points in this projective space. Clearly,
\begin{equation}
   \bigcup_{\mathfrak n \in \mathbf{P}^2(\mathbb R)} \mathbb C_{\mathfrak n} = \mathbb H \; .
\label{quatfib1}
\end{equation}
For any two distinct points $\mathfrak n, \mathfrak n' \in \mathbf{P}^2(\mathbb R)$,
$$ \mathbb C_\mathfrak n \cap \mathbb C_{\mathfrak n'} = \mathbb R .$$

Denote by $\mathbb H^*$ the set of quaternions with non-zero imaginary parts (i.e., the set $\mathbb H$ with the real axis removed) and similarly let $\mathbb C_{\mathfrak n}^*$ be  the set $\mathbb C_\mathfrak n$, with the real axis removed. Then clearly
\begin{equation}
\bigcup_{\mathfrak n \in \mathbf{P}^2(\mathbb R)} \mathbb C_{\mathfrak n}^* = \mathbb H^*.
\label{quatfib2}
\end{equation}
The set $\mathbb H^*$ is open in the usual topology of $\mathbb R^4$ and (\ref{quatfib2}) is a fibration of this space over the base space $\mathbf{P}^2(\mathbb R)$ and having fibres $\mathbb C_{\mathfrak n}^*$. (This fibration may be obtained by defining an equivalence relation on $\mathbb H^*$ as follows: let $\mathfrak q = x + \mathfrak n y, \;
\mathfrak q' = x' + \mathfrak n' y' \in \mathbb H^*$; we say that $\mathfrak q \sim \mathfrak q' \Leftrightarrow \mathfrak n = \pm \mathfrak n'$; it is then clear that $\mathbf{P}^2(\mathbb R) \simeq \mathbb H^*/ \sim$\; .) The affine group $G_{\text{aff}}$, with elements $(b, a), \; b \in \mathbb R, \; a \neq 0$ and composition rule $(b,a)\;(b', a') = (b+ ab', aa')$,  which acts freely and fibre-wise in the manner $(b, a)[x + \mathfrak n y] = (ax + b) + \mathfrak n ay$, leaves each fibre invariant; in fact we may identify each fibre $\mathbb C_{\mathfrak n}^*$ with $G_{\text{aff}}$. Thus, $\mathbb H^*$ is a principal bundle with $G_{\text{aff}}$ as structure group.

 While the above observations are fairly straightforward, it will be useful to see them in terms of a coordinatization of $\mathbb R^4$ (by introducing some sort of {\em cylindrical coordinates}). Let  $\mathbf x = (x_0, x_1, x_2, x_3 ) \in \mathbb R^4$ and let us write, using polar coordinates,
 \begin{eqnarray}
x_0  &=&  r\cos\theta_2 \nonumber \\
x_1  &=&  r\sin\theta_2\sin\theta_1 \cos\phi \nonumber \\
x_2  &=&  r\sin\theta_2\sin\theta_1 \sin\phi \nonumber\\
x_3  &=&  r\sin\theta_2 \cos\theta_1 , \qquad \theta_1\; , \theta_2 \in [0, \pi], \quad \phi \in (0, 2\pi] .
\label{polar-coords}
 \end{eqnarray}
The Lebesgue measure on $\mathbb R^4$ in these coordinates is
\begin{equation}
dx_0\;dx_1\;dx_2\; dx_3 = r^3\; dr\; d\Omega_3 (\theta_1, \theta_2, \phi ), \qquad d\Omega_3 (\theta_1, \theta_2, \phi ) = \sin\theta_1 \sin\theta_2\; d\theta_1\;d\theta_2\;d\phi,
\label{leb-meas}
\end{equation}

Introducing the unit vector,
\begin{equation}
\widehat{\mathbf n}(\theta_1, \phi)   = (\sin\theta_1\cos\phi, \;\sin\theta_1\sin\phi,\; \cos\theta_1 ),
\label{unit-vect}
\end{equation}
the unit vector $\widehat{\mathbf n}_0$ along the $x_0$-axis  and writing $x = r\cos\theta_2, \; y = r\sin\theta_2$, we may write for a point $\mathbf{x}$ in the ``upper half'' plane determined by $\widehat{\mathbf n}_0$ and $\widehat{\mathbf n}(\theta_1, \phi)$,
\begin{equation}
\mathbf x = \widehat{\mathbf n}_0 x + \widehat{\mathbf n}(\theta_1, \phi) y, \qquad x\in \mathbb R, \;\; y \geq 0,
\label{cylind-rep}
\end{equation}
an equation which should be compared to (\ref{q-complex-rep}). The above representation of a point in $\mathbb R^4$ is a sort of cylindrical coordinatization.  The vectors $\widehat{\mathbf n}(\theta_1, \phi)$ for all $\theta_1 \in [0, \pi]$ and $ \phi \in (0, 2\pi]$, constitute all the  points of $S^2$. If we restrict $\theta_1$ to $[0, \frac {\pi}2)$ we get the hemisphere $\frac 12S^2$, without the boundary rim and further, if we let $\theta_2$ run through $[0, 2\pi)$, then $y$ will run through $(-\infty , \infty )$. Thus, we have the alternative coordinatization for a point in $\mathbb R^4$,
\begin{equation}
\mathbf{x} = \widehat{\mathbf n}_0 x + \widehat{\mathbf n}(\theta_1, \phi) y, \qquad \theta_1 \in [0, \frac {\pi}2 ), \;\; \phi \in [0, 2\pi), \;\; x, y \in \mathbb R ,
\label{cylind-rep2}
\end{equation}
For each fixed $\widehat{\mathbf n}(\theta_1, \phi)$, it is now possible to introduce a complex structure on the set of vectors (\ref{cylind-rep2}) (with $x, y \in \mathbb R$),  by identifying $\widehat{\mathbf n}_0$ with a real axis and $\widehat{\mathbf n}(\theta_1, \phi)$ with a complex axis. This reflects the analogous realization of the quaternions $\mathbb H$ in terms of complex planes, as in (\ref{quatfib1}). Thus, from this point of view, the field of quaternions is simply $\mathbb R^4$, equipped with this (fibre-wise) complex structure and the imaginary unit quaternion $\mathfrak n$ is the quaternionic version of the unit vector $\widehat{\mathbf n}(\theta_1, \phi)$. The Lebesgue measure (\ref{leb-meas}) in terms of these variables becomes
\begin{equation}
dx_0\;dx_1\;dx_2\; dx_3 =  \vert y\vert\;\sqrt{x^2 + y^2}\; dx\; dy\; d\Omega (\theta_1,\phi ), \qquad d\Omega (\theta_1,\phi )= \sin\theta_1\;d\theta_1 d\phi .
\label{leb-meas2}
\end{equation}
The points which are left out in this coordinate patch form a set of Lebesgue measure zero.
\begin{remark}\label{Re11}
$\Cn\subset H$ is commutative.
However, elements from two different quaternion slices, $\Cn$ and $\C_{\fn'}$ (for $\fn,\fn'\in\Ps$ with $\fn\not=\fn'$) do not necessarily commute.
\end{remark}
\subsection{Quaternion Hilbert spaces}
For the sake of completeness and to gather the necessary properties, we define left quaternionic Hilbert spaces $V_\HQ^L, V_{\fn}^L$ and the left quaternionic Hilbert space of square integrable functions, $L^2_\HQ(X, \mu)$. For further details we refer the reader to \cite{Ad, Col, ThiAli, Vis}.
\subsubsection{The Hilbert space $V_\HQ^L$:} Let $V_\HQ^L$ be a vector space under left multiplication by quaternion scalars. For $f,g,h\in V_\HQ^L$ and $\qu\in H$, the inner product
$$\langle\cdot\vert\cdot\rangle: V_\HQ^L\times V_\HQ^L\longrightarrow \HQ$$
satisfies the following properties
\begin{enumerate}
\item[(i)]$\langle f\vert g\rangle=\overline{\langle g\vert f\rangle}$
\item[(ii)]$\|f\|^2=\langle f\vert f\rangle>0 ~~\text{unless}~~f=0,~\text{a real norm}$
\item[(iii)]$\langle f\vert g+h\rangle=\langle f\vert g\rangle+\langle f\vert h\rangle$
\item[(iv)]$\langle \qu f\vert g\rangle=\qu \langle f\vert g\rangle$
\item[(v)]$\langle f\vert \qu g\rangle=\langle f\vert g\rangle\overline{\qu}$
\end{enumerate}
We assume that $V_\HQ^L$ together with $\langle\cdot\vert\cdot\rangle$ is a separable Hilbert space.

\subsubsection{The Hilbert space $V_{\fn}^L$:} Let $V_{\fn}^L$ be a vector space under left multiplication by quaternion scalars from $\Cn$. For $f,g,h\in V_{\fn}^L$ and $\qu\in \Cn$, the inner product
$$\langle\cdot\vert\cdot\rangle: V_{\fn}^L\times V_{\fn}^L\longrightarrow \HQ$$
satisfies the following properties
\begin{enumerate}
\item[(i)]$\langle f\vert g\rangle=\overline{\langle g\vert f\rangle}$
\item[(ii)]$\|f\|^2=\langle f\vert f\rangle>0 ~~\text{unless}~~f=0,~\text{a real norm}$
\item[(iii)]$\langle f\vert g+h\rangle=\langle f\vert g\rangle+\langle f\vert h\rangle$
\item[(iv)]$\langle \qu f\vert g\rangle=\qu \langle f\vert g\rangle$
\item[(v)]$\langle f\vert \qu g\rangle=\langle f\vert g\rangle\overline{\qu}$
\end{enumerate}
We assume that $V_{\fn}^L$ together with $\langle\cdot\vert\cdot\rangle$ is a separable Hilbert space.  For any $f,g\in V_{\fn}^L$, the inner product $\langle f\vert g\rangle\in \HQ$. Thereby $\langle f\vert g\rangle$ not necessarily commute with the elements of $\Cn$. This feature alone can differentiate $V_{\fn}^L$ from a complex Hilbert space.

\begin{remark}\label{Re12}As of Remark (\ref{Re11}), the space $\Cn$ is commutative. In this regard, if we define the inner product on $V_{\fn}^L$ as
$$\langle\cdot\vert\cdot\rangle: V_{\fn}^L\times V_{\fn}^L\longrightarrow \Cn,$$
then for any $f,g\in V_{\fn}^L$ the inner product $\langle f\vert g\rangle$ commutes with all the elements of $\Cn$, and thereby the space $V_{\fn}^L$ will behave like a complex Hilbert space.
\end{remark}

\subsubsection{The space $L^2_\HQ(X,\mu)$:} Let $(X, \mu)$ be a measure space and $\HQ$ the field of quaternions, then
$$L_\HQ^2(X, \mu)=\left\{f:X\longrightarrow \HQ~~\vert~~\int_X|f(x)|^2d\mu(x)<\infty\right\}$$
is a left quaternion Hilbert space, with the (left) scalar product
$$\langle f\vert g\rangle=\int_Xf(x)\overline{g(x)}d\mu(x),$$
where $\overline{g(x)}$ is the quaternion conjugate of $g(x)$ and left scalar multiplication $af$, $a\in \HQ$, with $(af)(\qu)=af(\qu)$. (See \cite{Vis} for details).\\
We shall also need the definition of regularity in the sequel.
\begin{definition}(Slice-regular functions \cite{Col})\label{D2}
Let $\Omega\subseteq \HQ$ and a real differentiable (i.e., with respect to $x_i,\; i=0,1,2,3$) function $f:\Omega\longrightarrow \HQ$ is said to be slice left regular if, for every quaternion $\fn\in\Ps$, the restriction of $f$ to $\Cn$, has continuous partial derivatives (with respect to $x$ and $y$, every element in $\Cn$ is being uniquely expressible as $x + \fn y$) and satisfies
\begin{equation}
\overline{\partial}_\fn f (x + \fn y) := \frac 12\left(\frac {\partial f_\fn (x + \fn y )}{\partial x}
       + \fn \frac {\partial f_\fn (x + \fn y )}{\partial y}\right) = 0\; .
\label{leftslicereg}
\end{equation}
Similarly,  it is said to be slice right regular if
\begin{equation}
\overline{\partial}_\fn f (x + \fn y) := \frac 12\left(\frac {\partial f_\fn (x + \fn y )}{\partial x}
       +  \frac {\partial f_\fn (x + \fn y )}{\partial y}\fn\right) = 0\; .\\
\label{rightslicereg}
\end{equation}
The anti-regularity can be defined in a similar way.
\end{definition}
\begin{remark}\label{RRem1}
In view of the above definition, in partcular, a function $f:\HQ\longrightarrow \HQ$ is said to be regular or anti-regular if, for each $\fn\in \Ps$ the restriction of $f$ to $\Cn$, $f\vert_{\Cn}$ is regular or anti-regular respectively.
\end{remark}
We shall also need the following theorem in the sequel. Following the proof, line by line, of complex Hilbert spaces (for a proof in the complex case see \cite{Vag}, page 63) it can be proved for the quaternion Hilbert spaces as well.
\begin{theorem}\label{T6}
For an orthonormal sequence $\{u_n\}$ in a left quaternion Hilbert space $\mathfrak{H}$
\begin{enumerate}
\item[(i)]if $f\in \mathfrak{H}$, $\langle u_n\vert f\rangle=0, \forall n \Leftrightarrow f=0$, then $\{u_n\}$ is said to be dense or complete in $\mathfrak{H}$.\\
Other equivalent ways of charecterizing completenes for orthonormal sets are:
\item[(ii)]For every $f\in\mathfrak{H}$, $\sum_{n=1}^{\infty}\abs{\langle u_n\vert f\rangle}^2=\|f\|^2$, the Parseval's equation.
\item[(iii)] For all $f,g\in\mathfrak{H}$, the inner product $\langle f\vert g\rangle$ satisfies
$$\langle f\vert g\rangle=\sum_{n=1}^{\infty}\langle f\vert u_n\rangle\langle u_n\vert g\rangle.$$
\end{enumerate}
\end{theorem}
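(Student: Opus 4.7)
My plan is to transplant the classical complex Hilbert space proof, being careful that, since the inner product takes values in the noncommutative field $\HQ$, the Fourier coefficients of $f\in\mathfrak{H}$ with respect to an orthonormal sequence $\{u_n\}$ must be placed on the \emph{left} of the vectors $u_n$. With the candidate partial sum $s_N := \sum_{n=1}^N \langle f|u_n\rangle u_n$, a short calculation using left linearity~(iv), right conjugate linearity~(v), the identity $q\bar q=|q|^2$, and orthonormality gives
\[
\Bigl\langle u_m\Big|\sum_{n=1}^N c_n u_n\Bigr\rangle = \overline{c_m}\quad(m\le N),\qquad \|s_N\|^2 = \sum_{n=1}^N |\langle f|u_n\rangle|^2.
\]
Expanding $0 \le \|f-s_N\|^2$ via (iii)--(v) then yields Bessel's inequality $\sum_n |\langle f|u_n\rangle|^2 \le \|f\|^2$, and the Pythagorean identity above shows that $(s_N)$ is Cauchy in $\mathfrak{H}$, hence converges to some $s\in\mathfrak{H}$.

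The implication (ii)$\Rightarrow$(i) is immediate, since $\|f\|^2=\sum_n|\langle u_n|f\rangle|^2$ forces $f=0$ when all coefficients vanish. For (i)$\Rightarrow$(ii), I would show $f=s$: by continuity of the inner product (a routine consequence of the Cauchy--Schwarz inequality, which goes through in the quaternion setting), $\langle u_m|s\rangle = \lim_N \langle u_m|s_N\rangle = \overline{\langle f|u_m\rangle} = \langle u_m|f\rangle$ for every $m$, so $\langle u_m|f-s\rangle = 0$ for all $m$. Hypothesis (i) then forces $f=s$, and Parseval's equation follows from the Pythagorean identity applied to $s$.

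Finally, (iii)$\Rightarrow$(ii) is obtained by setting $g=f$ and noting $\langle f|u_n\rangle\langle u_n|f\rangle = |\langle u_n|f\rangle|^2$. For (ii)$\Rightarrow$(iii), rather than invoking a polarization identity (which is delicate in the quaternion case because of noncommutativity), I would compute $\langle f|g\rangle = \lim_N \langle s_N|t_N\rangle$ with $t_N := \sum_{n=1}^N \langle g|u_n\rangle u_n$; orthonormality together with (iv) and (v) collapses the double sum to $\sum_{n=1}^N \langle f|u_n\rangle\langle u_n|g\rangle$, and continuity of the inner product handles the passage to the limit. The main obstacle is purely bookkeeping: every step in the complex proof that silently uses commutativity, in particular any attempt to slide a scalar across an inner product bracket or to use a polarization formula, must be replaced by a manipulation that respects the prescribed sides in (iv) and (v). Once one commits to writing scalars only on the left of the $u_n$ and never moving them across a bracket, the classical argument goes through verbatim.
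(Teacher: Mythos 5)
Your proposal is correct and follows exactly the route the paper itself takes: the paper gives no details, stating only that the result follows ``line by line'' from the complex case in \cite{Vag}, and your argument is precisely that transplantation, with the necessary bookkeeping about placing scalars on the left and respecting properties (iv)--(v) of the inner product. The key computations (the Pythagorean identity for $s_N$, Bessel's inequality, and the collapse of $\langle s_N\vert t_N\rangle$ by orthonormality) all check out under the paper's conventions, so no gap remains.
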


\section{Some direct integral Hilbert spaces}\label{dir-int-sp}
Using the above fibration and  splitting of the Lebesgue measure, we have the direct integral representation of the complex Hilbert space,
\begin{equation}
L^2_{\mathbb C} (\mathbb R^4,\; dx_0\;dx_1\;dx_2\; dx_3) \simeq  \int_{\theta \in [0, \frac {\pi}2)} \! \int_{\phi \in [0, 2\pi)}^\oplus \mathfrak H_{\theta, \phi}\; d\Omega (\theta,\phi ),
\label{decomp1}
\end{equation}
where, for each $(\theta, \phi), \; \mathfrak H_{\theta, \phi}$ is a copy of the Hilbert space
$L^2_{\mathbb C} (\mathbb R^2, \; \vert y\vert\sqrt{x^2+ y^2}\; dx\;dy)$, which is essentially a Hilbert space of functions on the fibre determined by $\widehat{\mathbf n}(\theta, \phi)$. Here, it ought to be noted that the real axis is a set of measure zero (for the measure $\vert y\vert\sqrt{x^2+ y^2}\; dx\;dy$). If we equip $\mathbb R^2$ with a complex structure, we may also write,
\begin{equation}
\mathfrak H_{\theta, \phi} = L^2_{\mathbb C} (\mathbb C, \; \vert z\;Im\{ z\}\vert\frac {dz\wedge d\overline{z}}{2i} ), \qquad z = x +iy,
\label{comp-rep}
\end{equation}
It is now clear, that if we equip $\mathbb R^4$, with the structure of quaternions, and consider the Hilbert space $L^2_{\mathbb C} (\mathbb H, \; dx_0\;dx_1\;dx_2\; dx_3)$, then it too has exactly the same direct integral decomposition as (\ref{decomp1}).

Going a step further, it is not hard to see that if we consider the (left, or right) quaternionic Hilbert space $L^2_{\mathbb H} (\mathbb H, \; dx_0\;dx_1\;dx_2\; dx_3)$, of functions $f: \mathbb H \longrightarrow \mathbb H$, then it has a similar direct integral decomposition,
\begin{equation}
L^2_{\mathbb H} (\mathbb H,\; dx_0\;dx_1\;dx_2\; dx_3) \simeq \int_{\theta \in [0, \frac {\pi}2)} \! \int_{\phi \in [0, 2\pi)}^\oplus \mathfrak H_{\theta, \phi}\; d\Omega (\theta,\phi ),
\label{decomp2}
\end{equation}
where now $\mathfrak H_{\theta, \phi}$ is a copy of the (left or right) quaternionic Hilbert space $L^2_{\mathbb H} (\mathbb C, \; \vert z\;Im z\vert\frac {dz\wedge d\overline{z}}{2i} )$ of functions $f: \mathbb C \longrightarrow \mathbb H$, which again may be considered to be a Hilbert space of quaternion valued functions on the fibre determined by $\widehat{\mathbf n}(\theta, \phi)$.

In the next section we look at other direct integral decompositions of quaternionic Hilbert spaces and see how they are naturally related to families of coherent states.
\section{Some families of quaternionic nonlinear coherent states}\label{quat-nc-cs}
Suppose that we equip each complex plane $\mathbb C_{\mathfrak n}$ in (\ref{quatfib1}) with a measure $d\nu (z, \overline{z})$, for which the real axis constitutes a set of zero measure. For each $\widehat{\mathbf n}(\theta, \phi)$ (corresponding to the imaginary  unit quaternion $\mathfrak n$), we define the quaternionic Hilbert space $\mathfrak H_{\theta, \phi} = L^2_{\mathbb H} (\mathbb C, \; d\nu (z, \overline{z}) )$. Then, we again have a direct integral decomposition
\begin{equation}
L^2_{\mathbb H} (\mathbb H,\; d\nu (z, \overline{z})\;d\Omega (\theta,\phi ) ) \simeq \int_{\theta \in [0, \frac {\pi}2)} \! \int_{\phi \in [0, 2\pi)}^\oplus \mathfrak H_{\theta, \phi}\; d\Omega (\theta,\phi ).
\label{decomp5}
\end{equation}
In order to build an interesting family of quaternionic coherent states, we shall look at cases where the Hilbert space $L^2_{\mathbb H} (\mathbb H,\; d\nu (z, \overline{z})\;d\Omega (\theta,\phi ) )$ contains subspaces of (right or left) regular functions (of a specific type).

Writing $z = re^{i\vartheta}$, let the measure  $d\nu$ be of the form
\begin{equation}
   d\nu (z, \overline{z}) = d\mu (r^2)\; d\vartheta, \qquad r \in \mathbb R^+, \;\; \vartheta \in [0, 2\pi) ,
\label{moment-meas}
\end{equation}
where the measure $d\mu$ is assumed to have moments of all orders:
\begin{equation}
\mu_n = \int_0^\infty x^n\; d\mu (x) < \infty, \quad n =0,1,2, ... \infty.
\label{moments}
\end{equation}
We normalize this measure so that $\mu_0 = 1$. Next, defining the sequence of positive numbers,
\begin{equation}
x_n = \frac {\mu_n}{\mu_{n-1}}, \qquad n = 1,2, 3, \ldots , \quad x_0 \equiv 1,
\end{equation}
we assume that the series
\begin{equation}
\mathcal N (r) := \sum_{n=0}^\infty \frac {r^{2n}}{x_n!} < \infty, \qquad x_n! = x_1 x_2 x_3 \ldots x_n = \mu_n,
\end{equation}
converges for $0< r < l$, where $l$ could be finite or infinite. In the physical literature, in order to ensure the  self-adjointness of some associated operators it is assumed that the sum $\sum_{n=0}^\infty \frac 1{\sqrt{x_n}}$ diverges and $d\mu$ has support in  $(0, l)$ \cite{Akh}.

Assume now that $\mathfrak H = L^2_{\mathbb H} (\mathbb H,\; d\nu (z, \overline{z})\;d\Omega (\theta,\phi ) )$ is a left quaternionic Hilbert space, with scalar product
$$
  \left(\Phi \mid \Psi \right)_{\mathfrak H} = \int_{\mathbb H} \Phi (\mathfrak q )\overline{\Psi (\mathfrak q )}\; d\nu (z, \overline{z})\; d\Omega (\theta, \phi ). $$
In this space the vectors,
\begin{equation}
\Phi_n (\mathfrak q ) = \frac {\mathfrak q^n}{2\pi\sqrt{x_n !}}, \qquad n =0,1,2, \ldots , \infty,
\label{on-set}
\end{equation}
form an orthonormal set:
$$ \left(\Phi_m \mid \Phi_n \right)_{\mathfrak H} = \delta_{mn}, \qquad m,n = 0,1,2, \ldots , \infty .$$
The restrictions of these functions to $\mathbb C_{\mathfrak n}$:
$$ \phi_n ( z) :=\Phi_n (\mathfrak q )\vert_{\mathfrak n} = \frac {z^n}{2\pi\sqrt{ x_n !}}, \qquad z = x + \mathfrak n y, $$
form an orthogonal set in $\mathfrak H_{\theta,\phi} = L^2_{\mathbb H} (\mathbb C,\; d\nu (z, \overline{z})) )$:
$$
\langle \phi_m \mid \phi_n\rangle_{\theta, \phi} =  \int_{\mathbb C} \phi_m (z )\overline{\phi_n (z)}\; d\nu (z, \overline{z}) = \frac 1{2\pi}\; \delta_{mn}.$$
The vectors $\Phi_n (\mathfrak q )$ span a subspace of $\mathfrak H$, consisting of (right slice-) regular functions, which we denote by $\mathfrak H_{\text{reg}}$. The fact that this is a proper subspace of regular functions can be proved by using the decomposition of a regular function into two holomorphic functions on each slice $\mathbb C_{\mathfrak n}$ \cite{ThiAli}. Similarly, the restricted functions, $\phi_n (z)$, on each slice $\mathbb C_{\mathfrak n}$ generate a (proper) subspace of $\mathfrak H_{\theta, \phi}$, consisiting of analytic functions in the variable $z = x + \mathfrak n y$. Denoting this subspace by $\mathfrak H_{\theta, \phi}^{\text{reg}}$, we have the decomposition
\begin{equation}
\mathfrak H_{\text{reg}} \simeq \int_{\theta \in [0, \frac {\pi}2)} \! \int_{\phi \in [0, 2\pi)}^\oplus \mathfrak H_{\theta, \phi}^{\text{reg}}\; d\Omega (\theta,\phi ).
\label{dir-decomp7}
\end{equation}

Let $\{f_n\}_{n=0}^{\infty}$ be an orthonormal basis of $V_{\HQ}^L$. We define a family of {\em quaternionic nonlinear coherent states\/,} $\eta_{\overline{\mathfrak q}} \in V_{\HQ}^L, \; \mathfrak q \in \mathbb H$, as
 \begin{equation}
 \eta_{\overline{\mathfrak q}} = \frac 1{\sqrt{\mathcal N (r)}} \sum_{n=0}^\infty  \overline{\Phi_n (\mathfrak q )}f_n = \frac 1{2\pi\sqrt{\mathcal N (r)}}\sum_{n=0}^\infty  \frac {\overline{\mathfrak q}^n}{\sqrt{x_n !}}f_n\; .
 \label{quat-cs1}
 \end{equation}
 It is easy to check that these vectors are normalized, $\Vert \eta_{\overline{\mathfrak q}} \Vert^2 = 1$ and satisfy a {\em resolution of the identity\/,} in the sense that for any two vectors $\Psi_1, \Psi_2 \in V_{\HQ}^L$, the relation
 $$
   \int_{\mathbb H}\left(\Psi_1 \mid  \eta_{\overline{\mathfrak q}}\right) \left(  \eta_{\overline{\mathfrak q}}\mid \Psi_2\right)\;\mathcal N (r)\; d\nu (z, \overline{z})\;  d\Omega (\theta , \phi ) = \left(\Psi_1 \mid \Psi_2\right) \;  $$
holds, which we then write as the operator integral:
\begin{equation}
\int_{\mathbb H}\mid  \left. \eta_{\overline{\mathfrak q}}\right) \left(  \eta_{\overline{\mathfrak q}}\mid \right.\;\mathcal N (r)\; d\nu (z, \overline{z})\;  d\Omega (\theta , \phi ) = I_{V_{\HQ}^L} \; .
\label{quat-resolid}
\end{equation}

If we restrict the functions $\Phi_n (\mathfrak q)$ to the slice $\mathbb C_{\mathfrak n}$, we get the vectors in $V_{\fn}^L$
 \begin{equation}
\eta_{\overline{z}} = \frac 1{2\pi\sqrt{\mathcal N (r)}}\sum_{n=0}^\infty \frac {\overline{ z}^n}{\sqrt{x_n !}}g_n  \;, \qquad z = x + \mathfrak n y \in \mathbb C_{\mathfrak n}, \;\; r = \sqrt{x^2 + y^2}\; ,
\label{quat-cs2}
\end{equation}
where $\{g_n\}$ is an orthonormal basis of $V_{\fn}^L$.
They have the normalization $2\pi \Vert \eta_{\overline{z}}\Vert^2_{V_{\fn}^L} = 1$ and satisfy the resolution of the identity
\begin{equation}
2\pi \int_{\mathbb C_{\mathfrak n}}\mid  \left. \eta_{\overline{z}}\right) \left(  \eta_{\overline{z}}\mid \right.\;\mathcal N (r)\; d\nu (z, \overline{z}) = I_{V_{\fn}^L} \; .
\label{quat-resolid2}
\end{equation}
The coherent states in (\ref{quat-cs2}) look exactly like the nonlinear coherent states of quantum physics, except that they are now elements of a quaternionic Hilbert space.
\begin{remark}
One can also consider $V_{\HQ}^L=\Hr$ then the orthonormal basis $\{f_n\}$ in the CS (\ref{quat-cs1}) should be $\{\Phi_n\}$. In this case, the space $V_{\fn}^L=\Hrt$ and $g_n$ in the CS (\ref{quat-cs2}) will be the restricted vector $g_n=\Phi_n\vert_\fn=\phi_n$.
\end{remark}
\subsection{Reproducing Kernels} 
From the general construction (see the appendix), the map
\begin{equation}\label{ee3}
W:V_{\HQ}^L\longrightarrow L^2_{\HQ}(\HQ, d\nu(z,\oz)d\Omega(\theta,\phi))~~~~\text{with}~~~~Wf(\qu)=2\pi\mathcal{N}(r)^{\frac{1}{2}}\langle f\vert \eta_{\oqu}\rangle
\end{equation}
is a linear isometry onto a closed subspace
$$\mathfrak{H}_K^{\HQ}=WV_{\HQ}^L\subset L^2_{\HQ}(\HQ, d\nu(z,\oz)d\Omega(\theta,\phi))$$
and the space $\mathfrak{H}^{\HQ}_K$ is a reproducing kernel Hilbert space with reproducing kernel
\begin{equation}\label{ee4}
K:\HQ\times \HQ\longrightarrow \HQ,\quad K(\qu_1,\overline{\qu}_2)=\sum_{m=0}^{\infty}\frac{\oqu_1^m{\qu}_2^m}{4\pi^2x_m!}
\end{equation}
and the kernel satisfies the following properties.
\begin{enumerate}
	\item [(a)] hermiticity,~$K(\qu_{1},\overline{\qu_{2}})=\overline{K(\qu_{2},\overline{\qu_{1}})}$~~ for all~~$\qu_{1},\qu_{2}\in \HQ;$\\
	\item [(b)] positivity, ~$K(\qu,\overline{\qu})\geq 0$~~ for all~~$\qu\in \HQ;$\\
	\item [(c)] idempotence, $\displaystyle \int_{\HQ}K(\qu_{1},\overline{\qu_{2}})K(\qu_{2},\overline{\qu_{3}})d\nu(z,\oz)d\Omega(\theta,\phi)=K(\qu_{1},\overline{\qu_{3}})$~~ for all~~$\qu_{1},\qu_{3}\in \HQ$.
\end{enumerate}	
Further
\begin{equation}\label{ee6}
\mathfrak{H}_K^{\HQ}=\overline{\text{left span over}~ \HQ}\left\{\Phi_m(\qu)=\frac{{\qu}^m}{2\pi\sqrt{ x_m!}}~~\vert~~m\in\mathbb{N}\right\}
\end{equation}
is a space of right regular functions. Further, from equation (\ref{ee6}), the set
\begin{equation}\label{ee6-a}
B_{\HQ}=\left\{\frac{{\qu}^m}{2\pi\sqrt{x_m!}}~~\vert~~\qu\in \HQ,~~m\in\mathbb{N}\right\}
\end{equation}
 is total in $\mathfrak{H}_K^{\HQ}$.

\subsection{Reproducing kernels on $V_{\fn}^L$}
Now let us define the sequence of functions
\begin{equation}\label{ee7}
U_{m}:\C_{\fn}\longrightarrow \HQ;~~~\text{by}~~~ U_{m}(z)=\frac{z^{m}}{\sqrt{2\pi x_m!}}~~~\text{for all}~~  m\in\mathbb{N}.
\end{equation}
The functions $U_m$ satisfy
\begin{enumerate}
\item[1.] $\displaystyle U_m(z)\in L_{\HQ}^2(\C_\fn, d\nu(z,\oz))\quad\text{for all}~~m\in\mathbb{N}~~\text{and for all}~~z\in \C_\fn$
\item[2.] $\displaystyle 0<\sum_{m=0}^{\infty}|U_m(z)|^2<\infty\quad\text{for all}~~z\in \C_\fn$
\item[3.]$\displaystyle\int_{\C_\fn}U_m(z)\overline{U_n(z)}d\nu(z,\oz)=\delta_{mn}$.
\end{enumerate}

Therefore, from the general construction (see the appendix), the function
$$K:\C_\fn\times \C_\fn\longrightarrow \HQ~~~~\text{with}$$
\begin{equation}\label{eq7}
K(z_{1},\overline{z_{2}})=
\sum_{m=0}^{\infty}\frac{\oz_{1}^{m}{z_{2}}^{m}}{2\pi x_m!},~~~\mbox{~~for all~~}~z_{1},z_{2}\in \C_\fn,
\end{equation}
is a reproducing kernel.
Define the function $W_{\fn}:V^{L}_{\fn}\longrightarrow L^2_\HQ (\C_\fn, d\nu(z,\oz ))$ with
\begin{equation}\label{eq10}
 W_\fn f (z) =2\pi \mathcal N (r)^{\frac 12}\langle f\mid \eta_{\oz}\rangle_{V^L_\fn}
\mbox{~~~for all~~} z\in \C_\fn \mbox{~~~and~~}f\in V^{L}_\fn
\end{equation}
and
$$ \mathfrak H_{K}^{\fn} := W_{\fn}V^{L}_{\fn} \subset  L^2_\HQ (\C_\fn,  d\nu(z,\oz) ).$$
By contruction $W_{\fn}$ is a linear isometry onto the closed subspace $\mathfrak H_{K}^{\fn}$ and it is a reproducing kernel Hilbert space. Further
\begin{equation}\label{eq16}
\mathfrak H_{K}^{\fn} =\overline{\text{left span\,over}\HQ}\{\frac{z^{m}}{\sqrt{2\pi x_m!}}\vert z\in \C_\fn\mbox{~~and~~}m\in\mathbb{N}  \}
\end{equation}
That is,
\begin{equation}\label{ee6-b}
B_{\fn}=\left\{\frac{z^m}{\sqrt{2\pi x_m!}}~~\vert~~z\in \C_\fn,~~m\in\mathbb{N}\right\}
\end{equation}
is total in $\mathfrak{H}_K^\fn$.

\begin{remark}\label{Re6}We shall need the following facts about $\Ps$ in the following section. $\Ps$ is a locally compact set with  the integral measure 
$d\Omega(\theta,\phi)=\sin\theta d\phi d\theta.$ Further
\begin{equation}\label{eq8}
\fn= \cos\theta\sin\phi i+ \sin\theta\sin\phi j+ \cos\phi k~;
\end{equation}
also
\begin{equation}\label{I1}
\int_0^{\frac{\pi}{2}}\int_0^{2\pi} \fn d\Omega(\theta,\phi)=0
\end{equation}
and
\begin{equation}\label{I2}
\int_0^{\frac{\pi}{2}}\int_0^{2\pi}d\Omega(\theta,\phi)  =2\pi.
\end{equation}
\noindent
\end{remark}
\section{Measurable field and Direct integral}
In this section we prove that the bundle of reproducing kernel Hilbert spaces obtained in the above section forms a measurable field of Hilbert spaces and their direct integral is isomorphic to the reproducing kernel Hilbert space $\HI_K^{\HQ}$.  In order to do so we acquire the following background materials from \cite{Alibk, Ta}.
\subsection{Background Theory}
Let $Y$ be any locally compact space, equipped with a Borel measure. Suppose that, for each $y\in Y$, we associate a Hilbert space $\mathfrak K_{y}$. Let $\left\langle \cdot \mid \cdot\right\rangle_{y}$ and $\Vert\cdot\Vert_{y}$ denote the inner product and norm, respectively, in $\mathfrak K_{y}$. Here, we have to make an assumption that the Cartesian product $\displaystyle \prod_{y\in Y}\mathfrak K_{y}$ has a natural vector space structure.
\begin{definition}\label{d41}\cite{Alibk}
The family $\{\mathfrak K_{y}~\mid~y\in Y\}$ is called a measurable field of Hilbert spaces, if there exists a subspace $\mathfrak M$ of the product space $\displaystyle \prod_{y\in Y}\mathfrak K_{y}$ such that,
\begin{enumerate}
	\item for each $\Phi\in\mathfrak M$, the positive, real-valued function $y\longmapsto\Vert\Phi(y)\Vert_{y}$ on $Y$ is $\nu$-measurable;
	\item if for any $\displaystyle\Phi\in \prod_{y\in Y}\mathfrak{K}_{y}$, the complex-valued functions $y\longmapsto\left\langle \Phi(y)\mid \Psi(y)\right\rangle_{y}$, for all $\Psi\in\mathfrak{M}$, are $\nu$-measurable then $\Phi\in\mathfrak{M}$~; and
	\item there exists a countable subset $\{\Phi_{n}\}_{n=1}^{\infty}$ of $\mathfrak{M}$ such that for each $y\in Y$ the set of vectors $\{\Phi_{n}(y)\}_{n=1}^{\infty}$ is total in $\mathfrak{K}_{y}$.
\end{enumerate}
\end{definition}
Elements in $\mathfrak{K}_{y}$ are called $\nu$-{\em{measurable vector fields}} and the sequence $\{\Phi_{n}\}_{n=1}^{\infty}$ a {\em{fundamental sequence}} of $\nu$-measurable vector fields. Measurable field of Hilbert spaces are convenient to construct the {\em{direct integral}} of Hilbert spaces. Next lemma helps us to identify the measurable fields of Hilbert spaces.
\begin{lemma}\label{141}\cite{Alibk}
Let $\{\Phi_{n}\}_{n=1}^{\infty}\subseteq\displaystyle\prod_{y\in Y}\mathfrak{K}_{y}$ satisfy,
\begin{enumerate}
	\item for each $m$ and $n$, the function $y\longmapsto\left\langle \Phi_{m}(y)\mid \Phi_{n}(y)\right\rangle_{y}$ on $Y$ is $\nu$-measurable\; ; and
	\item for each $y\in Y$, the sequence of vectors $\{\Phi_{n}(y)\}_{n=1}^{\infty}$ is total in $\mathfrak{K}_{y}$.
\end{enumerate}
Then the set
$$\mathfrak M=\{\Psi\in\prod_{y\in Y}\mathfrak{K}_{y}~\mid~y\longmapsto\left\langle \Phi(y)_{n}\mid \Psi(y)\right\rangle_{y}~~\mbox{~~is~~}~\nu-\mbox{measurable for all~~}~n\}$$
satisfies the conditions 1., 2. and 3. of Definition (\ref{d41}), and hence $\{\mathfrak K_{y}~\mid~y\in Y\}$ is a measurable field of Hilbert space and $\mathfrak M$ a $\nu$-measurable field of vectors.
\end{lemma}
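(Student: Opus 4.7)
The plan is to verify the three conditions of Definition \ref{d41} in a convenient order, using the candidate set $\mathfrak{M}$ defined in the statement. Condition (3) is the cheapest: the countable family $\{\Phi_n\}_{n=1}^{\infty}$ itself is total in each fibre $\mathfrak{K}_y$ by hypothesis (2), and it lies in $\mathfrak{M}$ because hypothesis (1) says that $y\mapsto\langle\Phi_m(y)\mid\Phi_n(y)\rangle_y$ is $\nu$-measurable for every $m,n$. Condition (2) of the definition is then automatic: if some $\Phi\in\prod_{y\in Y}\mathfrak{K}_y$ has the property that $y\mapsto\langle\Phi(y)\mid\Psi(y)\rangle_y$ is measurable for every $\Psi\in\mathfrak{M}$, we may specialize $\Psi=\Phi_n$ and read off from the very definition of $\mathfrak{M}$ that $\Phi\in\mathfrak{M}$.

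The real content is condition (1): for every $\Phi\in\mathfrak{M}$ the scalar function $y\mapsto\Vert\Phi(y)\Vert_y$ must be $\nu$-measurable. The natural route is to produce a sequence $\{e_n(\cdot)\}$ of $\nu$-measurable sections with the property that, for each fixed $y$, $\{e_n(y)\}$ is an orthonormal basis of $\overline{\mathrm{span}}\{\Phi_n(y)\}=\mathfrak{K}_y$; Parseval's identity (Theorem \ref{T6}(ii)) then yields
\begin{equation*}
\Vert\Phi(y)\Vert_y^{2}=\sum_{n=1}^{\infty}\bigl|\langle e_n(y)\mid\Phi(y)\rangle_y\bigr|^{2},
\end{equation*}
and a countable sum of measurable functions is measurable. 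Each $e_n(y)$ will be built from $\Phi_1(y),\dots,\Phi_n(y)$ by Gram--Schmidt, and the coefficients that appear in this process are rational expressions in the inner products $\langle\Phi_i(y)\mid\Phi_j(y)\rangle_y$, all of which are measurable in $y$ by hypothesis (1). Consequently the inner products $\langle e_n(y)\mid\Phi(y)\rangle_y$ are themselves measurable linear combinations of the functions $\langle\Phi_k(y)\mid\Phi(y)\rangle_y$ that are measurable by the definition of $\mathfrak{M}$.

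The main technical obstacle is the measurable Gram--Schmidt construction itself, because at some points $y$ certain partial families $\{\Phi_1(y),\dots,\Phi_k(y)\}$ may be linearly dependent or some $\Phi_k(y)$ may vanish, and one cannot blindly divide by $\Vert\cdot\Vert_y$. The remedy is a standard measurable partition argument: define, for each $k$, the measurable set $Y_k=\{y\in Y:\dim\mathrm{span}\{\Phi_1(y),\dots,\Phi_k(y)\}>\dim\mathrm{span}\{\Phi_1(y),\dots,\Phi_{k-1}(y)\}\}$, which is measurable because membership is decided by the non-vanishing of a Gram determinant built from the measurable functions in (1). On $Y_k$ one takes the $k$-th Gram--Schmidt vector to be the normalized projection of $\Phi_k(y)$ onto the orthogonal complement of the span of the earlier selected vectors, with the corresponding coefficients given by ratios whose denominators are positive Gram determinants; off $Y_k$ one sets the section equal to $0$ and skips to the next index. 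A careful bookkeeping of these partitions produces the desired measurable orthonormal family $\{e_n\}$ on all of $Y$, completing the proof of (1) and thus of the lemma, with the final assertion that $\mathfrak{M}$ is a $\nu$-measurable field of vectors being exactly the summary of conditions (1)--(3).
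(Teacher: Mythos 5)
The paper itself offers no proof of this lemma --- it is quoted verbatim as background material and attributed to \cite{Alibk} --- so there is no in-paper argument to compare yours against. Judged on its own, your proposal is correct and is essentially the standard textbook proof (the one found in Dixmier or in Takesaki's treatment of direct integrals): conditions 2 and 3 of Definition \ref{d41} are immediate from the definition of $\mathfrak M$ together with hypotheses (1) and (2) (for condition 2, note only that $\langle\Phi(y)\mid\Phi_n(y)\rangle_y$ and $\langle\Phi_n(y)\mid\Phi(y)\rangle_y$ are conjugates, so measurability of one gives the other), and the real work is the measurable Gram--Schmidt orthonormalization followed by Parseval (Theorem \ref{T6}(ii)) to get measurability of $y\mapsto\Vert\Phi(y)\Vert_y$. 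Your handling of the degenerate fibres by a measurable partition, inserting zero sections where the dimension does not increase, is the right device; the interspersed zeros do not disturb the Parseval identity since the nonzero $e_n(y)$ still span a dense subspace of $\mathfrak K_y$. The one point worth flagging is that the Gram-determinant test for membership in $Y_k$ is cleanest in the complex setting in which \cite{Alibk} states the lemma; in the quaternionic setting of this paper one should instead phrase the test via the (real, measurable) squared distance $\Vert\Phi_k(y)\Vert_y^2-\sum_{j}\abs{\langle e_j(y)\mid\Phi_k(y)\rangle_y}^2$ from $\Phi_k(y)$ to the span of the previously selected vectors, which avoids noncommutative determinants and is measurable by induction. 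In the application actually made in Section 5 the inner products $\langle\Phi_m(y)\mid\Phi_n(y)\rangle_y$ are the constants $\delta_{mn}$, so this subtlety never bites.
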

\begin{definition}\cite{Alibk}
Suppose that we are give $\{Y,\nu\}$ as before and a measurable field of Hilbert spaces $\{\mathfrak K_{y}~\mid~y\in Y\}$ along with  the set of $\nu$-measurable vector field $\mathfrak M$. Let $\overline{\mathfrak H}\subseteq\mathfrak M$ be the collection of all ($\nu$-equivalence class of) vector field $\Phi$ satisfying
\begin{equation}\label{eq27}
 \Vert\Phi\Vert^{2}:=\int_{Y}\Vert\Phi(y)\Vert_{y}^{2}d\nu(y)<\infty,
\end{equation}
and define on it the scalar product
\begin{equation}\label{eq28}
 \left\langle \Phi\mid\Psi\right\rangle:=\int_{Y}\left\langle \Phi(y)\mid\Psi(y)\right\rangle_{y}d\nu(y)\; ,~~\Phi,\Psi\in\overline{\mathfrak H}.
\end{equation}
It can be shown that $\overline{\mathfrak H}$ is complete in the norm (\ref{eq27}) and hence , equipped with the scalar product (\ref{eq28}), it becomes a Hilbert space. We call $\overline{\mathfrak H}$ the {\em{direct integral}} of Hilbert spaces $\{\mathfrak K_{y}~\mid~y\in Y\}$ and write
$$\overline{\mathfrak H}=\int_{Y}^{\oplus}\mathfrak K_{y}d\nu(y).$$
\end{definition}
{\em{Note:}}
 The definition of the product space is
$$\prod_{y\in Y}\mathfrak{K}_{y}:=\{f:Y\longrightarrow\bigcup_{y\in Y}\mathfrak K_{y} ~\mid~~ f(y)\in\mathfrak K_{y},~~\mbox{~~for all~~} y\in Y \}.$$
	
\subsection{Measurable field of reproducing kernel Hilbert spaces}
We  know that $\Ps$ is locally compact and  $d\Omega(\theta,\phi)=\sin\theta d\phi d\theta$ is a measure on it. For each $\fn\in\Ps,~~\mathfrak H_{K}^{\fn}$ is a Hilbert space. So in the following we adapt $Y=\Ps$. Let
\begin{equation}\label{m1}
\overline{\mathfrak{H}}_K=
\left\{\left(h\vert_{\C_\fn}\right)_{\fn\in\Ps}
=\left(h\vert_{\fn}\right)_{\fn\in\Ps}~~\vert~~h\in\mathfrak{H}_K^\HQ\right\}
\end{equation}
In the following proposition we prove that $\overline{\mathfrak{H}}_K$ is isomorphic to $\mathfrak{H}_K^\HQ$.
\begin{proposition}\label{piso}
The Hilbert space $\mathfrak{H}_K^\HQ$ is isomorphic to $\overline{\mathfrak{H}}_K$. i.e. $\mathfrak{H}_K^\HQ\cong\overline{\mathfrak{H}}_K$.
\end{proposition}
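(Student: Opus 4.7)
The plan is to exhibit an explicit linear bijection $R:\mathfrak{H}_K^{\HQ}\longrightarrow\overline{\mathfrak{H}}_K$ and then transport the Hilbert space structure of $\mathfrak{H}_K^{\HQ}$ through it. The natural candidate is the slice-restriction map
\[
R(h) = \bigl(h\vert_{\C_{\fn}}\bigr)_{\fn\in\Ps},
\]
which is manifestly linear in $h$ and, by the very definition \eqref{m1} of $\overline{\mathfrak{H}}_K$, maps onto $\overline{\mathfrak{H}}_K$. Thus only injectivity and compatibility of the inner products remain.

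For injectivity I would argue as follows. Suppose $R(h)=0$, so that $h\vert_{\C_{\fn}}\equiv 0$ for every $\fn\in\Ps$. By the fibration \eqref{quatfib1}, $\bigcup_{\fn\in\Ps}\C_{\fn}=\HQ$, so $h(\qu)=0$ for every $\qu\in\HQ$ and hence $h=0$ in $\mathfrak{H}_K^{\HQ}$ (recall that elements of $\mathfrak{H}_K^{\HQ}\subset L^2_{\HQ}(\HQ,d\nu(z,\oz)\,d\Omega(\theta,\phi))$ are bona fide functions since they lie in a reproducing kernel Hilbert space built from the total set \eqref{ee6-a}). Therefore $R$ is a linear bijection onto $\overline{\mathfrak{H}}_K$.

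I would then equip $\overline{\mathfrak{H}}_K$ with the pulled-back inner product
\[
\bigl\langle R(h_1)\,\bigl|\,R(h_2)\bigr\rangle_{\overline{\mathfrak{H}}_K}
:= \bigl\langle h_1\,\bigl|\,h_2\bigr\rangle_{\mathfrak{H}_K^{\HQ}},
\]
which is well defined by the injectivity of $R$. By construction this turns $R$ into a unitary (indeed isometric) isomorphism of left quaternionic Hilbert spaces, which is exactly the statement $\mathfrak{H}_K^{\HQ}\cong\overline{\mathfrak{H}}_K$.

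The only mild obstacle is the injectivity step: one has to be sure that vanishing on every slice really forces the element of $\mathfrak{H}_K^{\HQ}$ to be zero rather than merely zero a.e.\ with respect to the ambient measure. This is handled by remarking that $\mathfrak{H}_K^{\HQ}$ is the closed left span of the monomials \eqref{ee6-a}, so its elements are represented by genuine (right slice-regular) quaternion functions on $\HQ$, and then invoking \eqref{quatfib1}. With this observation the proposition follows.
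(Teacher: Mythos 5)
Your proposal is correct and follows essentially the same route as the paper: the slice-restriction map $h\mapsto(h\vert_{\C_{\fn}})_{\fn\in\Ps}$, with surjectivity immediate from the definition of $\overline{\mathfrak{H}}_K$ and injectivity from $\bigcup_{\fn\in\Ps}\C_{\fn}=\HQ$. The final step of pulling back the inner product is unnecessary for this statement (which asserts only a linear isomorphism) and slightly at odds with the paper, which instead equips $\overline{\mathfrak{H}}_K$ with the direct-integral inner product and shows in the next proposition that the same map is isometric only up to the constant $2\pi$.
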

\begin{proof}
Define $$\varphi:\mathfrak{H}_K^\HQ\longrightarrow\overline{\mathfrak{H}}_K$$
by $$\varphi(h)=(h|_{\fn})_{\fn\in\Ps},\hspace{0.25cm}\mbox{for all~~}~~h\in\mathfrak{H}_K^\HQ.$$
{\em $\varphi$ is well defined:} For, let $h,k\in\mathfrak{H}_K^\HQ$ with $h=k$. Then $h,k:\HQ\longrightarrow \HQ$ are mappings and
$$\varphi(h)=(h|_{\fn})_{\fn\in\Ps}\hspace{1cm}\mbox{and}
\hspace{1cm}\varphi(k)=(k|_{\fn})_{\fn\in\Ps}.$$
Now $h=k$ implies that $h|_{\fn}=k|_{\fn},\hspace{0.25cm}\mbox{for all~~}~~\fn\in\Ps$. Thus
$$\varphi(h)=(h|_{\fn})_{\fn\in\Ps}=(k|_{\fn})_{\fn\in\Ps}=\varphi(k).$$
So the uniqueness property follows, and the closure property follows clearly.\\
{\em $\varphi$ is linear:} It follows trivially.\\
{\em $\varphi$ is injective:} For, let $h,k\in\mathfrak{H}_K^\HQ$ with $\varphi(h)=\varphi(k).$ Then
$$(h|_{\fn})_{\fn\in\Ps}=(k|_{\fn})_{\fn\in\Ps}$$
and so
$$h|_{\fn}=k|_{\fn},\hspace{0.25cm}\mbox{for all~~}~~\fn\in\Ps.$$
That is,
$$h(z)=k(z),\hspace{0.25cm}\mbox{for all~~}~~z\in \C_\fn\mbox{~~ and for all~~}~~\fn\in\Ps.$$
Since we have $\displaystyle \HQ=\bigcup_{\fn\in\Ps}\C_\fn$,
$$h(\qu)=k(\qu),\hspace{0.25cm}\mbox{for all~~}~~\qu\in H.$$
Thus $h=k$ and $\varphi$ is injective.\\
{\em $\varphi$ is surjective:} It follows clearly from the definition of $\varphi$.\\
Hence $\varphi$ is an isomorphism and the result follows.
\end{proof}
The Hilbert space $\mathfrak{H}_K^\HQ$ and $\overline{\mathfrak{H}}_K$ are in fact isometrically isomorphic. In the following proposition we validate this claim.
\begin{proposition}\label{p51}
The Hilbert space $\mathfrak{H}_K^\HQ$ is isometrically isomorphic to $\overline{\mathfrak{H}}_K$ up to a constant.
\end{proposition}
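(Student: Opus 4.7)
The approach is to compute the direct-integral norm of $\varphi(h)$ and compare it with $\|h\|_{\mathfrak{H}_K^\HQ}$, using the total sets $B_\HQ=\{\Phi_m\}$ and $B_\fn=\{U_m\}$ constructed in Section~\ref{quat-nc-cs}. Since Proposition~\ref{piso} already provides the vector-space isomorphism, the remaining task is simply to track the normalizations slice by slice. I would equip $\overline{\mathfrak{H}}_K$ with the direct-integral inner product of Equation~(\ref{eq28}),
$$
\langle \varphi(h)\mid\varphi(g)\rangle_{\overline{\mathfrak{H}}_K}
=\int_{\Ps}\langle h|_\fn\mid g|_\fn\rangle_{\mathfrak{H}_K^\fn}\,d\Omega(\theta,\phi).
$$

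The key observation is the slice restriction identity: for every $\fn\in\Ps$,
$$
\Phi_m|_\fn(z) = \frac{z^m}{2\pi\sqrt{x_m!}} = \frac{1}{\sqrt{2\pi}}\,U_m(z),
$$
so $\|\Phi_m|_\fn\|^2_{\mathfrak{H}_K^\fn}=1/(2\pi)$ independently of $\fn$. Combined with Equation~(\ref{I2}), this yields
$$
\|\varphi(\Phi_m)\|^2_{\overline{\mathfrak{H}}_K}
=\int_\Ps \frac{1}{2\pi}\,d\Omega(\theta,\phi)
=1=\|\Phi_m\|^2_{\mathfrak{H}_K^\HQ},
$$
establishing the norm identity on the total set.

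For a finite left combination $h=\sum_m c_m\Phi_m$ with $c_m\in\HQ$, Theorem~\ref{T6} gives $\|h\|^2_{\mathfrak{H}_K^\HQ}=\sum_m|c_m|^2$; on any slice, $h|_\fn=(1/\sqrt{2\pi})\sum_m c_m U_m$ yields $\|h|_\fn\|^2_{\mathfrak{H}_K^\fn}=(1/(2\pi))\sum_m|c_m|^2$, which is $\fn$-independent. Integrating over $\Ps$ recovers $\sum_m|c_m|^2$, so $\|\varphi(h)\|=\|h\|$ on a dense subset; continuity of $\varphi$ extends the identity to all of $\mathfrak{H}_K^\HQ$. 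The multiplicative constant in the statement therefore works out to unity under this normalization of $d\Omega$; a different scaling of the base-space measure would merely rescale it, which is presumably why the proposition is phrased up to a constant.

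The main technical point I anticipate is justifying the termwise interchange of summation with the integral over $\Ps$ when passing from finite combinations to general elements of $\mathfrak{H}_K^\HQ$, together with the Parseval identity inside the non-commutative slice space $\mathfrak{H}_K^\fn$. Both are handled routinely---the former by monotone convergence applied to the non-negative integrand $\|h|_\fn\|^2_{\mathfrak{H}_K^\fn}$, the latter by the left-orthonormality of $\{U_m\}$ in $\mathfrak{H}_K^\fn$ via Theorem~\ref{T6}.
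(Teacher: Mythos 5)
Your proof follows essentially the same route as the paper's: expand $h$ in the total set $\{\Phi_m\}$, restrict slice by slice, use the orthonormality of $\{U_m\}$ in $\mathfrak{H}_K^{\fn}$, and integrate over $\Ps$ with $\int_{\Ps} d\Omega = 2\pi$. The only difference is that you track the normalization $\Phi_m|_{\fn} = U_m/\sqrt{2\pi}$ and land on the constant $1$, whereas the paper identifies $\Phi_m|_{\fn}$ directly with $U_m$ and lands on the constant $2\pi$ --- in either case the claimed isometry up to a constant holds.
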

\begin{proof}
$\mathfrak{H}_K^\HQ$ is isomorphic to $\overline{\mathfrak{H}}_K$ has been done in the proposition (\ref{piso}). Now recall the isomorphism
$$\varphi:\mathfrak{H}_K^\HQ\longrightarrow\overline{\mathfrak{H}}_K$$
defined by $$\varphi(h)=(h|_{\fn})_{\fn\in\Ps},\hspace{0.25cm}\mbox{for all~~}~~h\in\mathfrak{H}_K^\HQ.$$
What is left to show is that $\varphi$ is an isometry. For, let $h,k\in\mathfrak{H}_K^\HQ$, then there exist $\{\alpha_{m}\},\{\beta_{m}\}\subset \HQ$ (note that they do not depend on $\fn\in \Ps$) such that
$$h=\sum_{m=0}^{\infty}\alpha_m\Phi_{m}\hspace{1cm}\mbox{and}\hspace{1cm}k
=\sum_{m=0}^{\infty}\beta_m\Phi_{m}$$
and so
$$~h|_{\fn}=\sum_{m=0}^{\infty}\alpha_mU_{m}\hspace{1cm}\mbox{and}\hspace{1cm}
k|_{\fn}=\sum_{m=0}^{\infty}\beta_mU_{m},$$
where $\Phi_m$ and $U_m$ are as in (\ref{ee6}) and (\ref{ee7}) respectively.
Now
\begin{eqnarray*}
\langle h|_{\fn}\vert k|_{\fn}\rangle_{\fn}&=&\left\langle \sum_{m=0}^{\infty}\alpha_{m}U_m\vert\sum_{n=0}^{\infty}\beta_{n}U_n\right\rangle_{\fn}\\
&=&\sum_{m=0}^{\infty}\sum_{n=0}^{\infty}\alpha_{m}\;\langle U_m\vert U_n\rangle_{\fn}\overline{\beta_{n}}\\
&=&\sum_{m=0}^{\infty}\alpha_{m}\overline{\beta_{m}}\hspace{1cm}\mbox{as~~}~~\langle U_m\vert U_n\rangle_{\fn}=\delta_{mn}.
\end{eqnarray*}
So
\begin{eqnarray*}
\langle \varphi(h)\vert \varphi(k)\rangle&=&\int_0^{\frac{\pi}{2}}\int_0^{2\pi}\langle h|_{\fn}\vert k|_{\fn}\rangle_{\fn}d\Omega(\theta, \phi)\\
&=&2\pi\sum_{m=0}^{\infty}\alpha_{m}\overline{\beta_{n}}\hspace{1cm}\mbox{as~~}~~\int_0^{\frac{\pi}{2}}
\int_0^{2\pi}d\Omega(\theta,\phi)=2\pi\\
&=&2\pi\sum_{m=0}^{\infty}\sum_{n=0}^{\infty}\alpha_{m}\;
\langle\Phi_m\vert\Phi_n\rangle~\overline{\beta_{n}}\hspace{1cm}\mbox{as~~}~
~\langle\Phi_m\vert\Phi_n\rangle=\delta_{mn}\\
&=&2\pi\left\langle\sum_{m=0}^{\infty}\alpha_{m}\Phi_m\vert\sum_{n=0}^{\infty}\beta_{n}\Phi_n\right\rangle\\
&=&2\pi\langle h\vert k\rangle.
\end{eqnarray*}
The conclusion follows.
\end{proof}
\begin{proposition}\label{p44}
$\overline{\mathfrak {H}}_K\subseteq\displaystyle \prod_{I\in\mathbb{S}}\mathfrak H_{K}^{\fn}$.
\end{proposition}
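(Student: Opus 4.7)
The plan is to unpack the definition of $\overline{\mathfrak{H}}_K$ and verify that the restriction $h|_\fn$ of any $h \in \mathfrak{H}_K^{\HQ}$ lies in $\mathfrak{H}_K^{\fn}$ for every $\fn \in \Ps$. Since the product space $\prod_{\fn\in\Ps}\mathfrak{H}_K^{\fn}$ consists, by the Note following the direct-integral definition, of all maps $f$ on $\Ps$ with $f(\fn)\in \mathfrak{H}_K^{\fn}$, this will suffice.

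First I would take an arbitrary element of $\overline{\mathfrak{H}}_K$; by (\ref{m1}) it has the form $(h|_\fn)_{\fn\in\Ps}$ for some $h\in \mathfrak{H}_K^{\HQ}$. By (\ref{ee6}), $h$ admits an expansion $h=\sum_{m=0}^{\infty}\alpha_m\Phi_m$ with $\{\alpha_m\}\subset \HQ$ and $\sum_m|\alpha_m|^2=\|h\|^2<\infty$, because $\{\Phi_m\}$ is orthonormal in $\mathfrak{H}_K^{\HQ}$.

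Next I compare the two orthonormal families: restricting $\Phi_m$ to $\C_\fn$ gives
\[
\Phi_m|_\fn(z)=\frac{z^m}{2\pi\sqrt{x_m!}}=\frac{1}{\sqrt{2\pi}}\,U_m(z),\qquad z\in \C_\fn,
\]
where $U_m$ is the orthonormal basis element of $\mathfrak{H}_K^{\fn}$ from (\ref{ee7}). Therefore, at least formally,
\[
h|_\fn = \sum_{m=0}^{\infty}\alpha_m\,\Phi_m|_\fn=\sum_{m=0}^{\infty}\frac{\alpha_m}{\sqrt{2\pi}}\,U_m.
\]
Since $\{U_m\}$ is orthonormal in $\mathfrak{H}_K^{\fn}$ and $\sum_m \frac{|\alpha_m|^2}{2\pi}<\infty$, this series converges in $\mathfrak{H}_K^{\fn}$ (equivalently, in $L^2_\HQ(\C_\fn,d\nu(z,\oz))$), and its sum agrees with the pointwise restriction $h|_\fn$ on $\C_\fn$. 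Hence $h|_\fn \in \mathfrak{H}_K^{\fn}$ for every $\fn\in\Ps$, so $(h|_\fn)_{\fn\in\Ps}$ is a legitimate element of $\prod_{\fn\in\Ps}\mathfrak{H}_K^{\fn}$, yielding the claimed inclusion.

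The only mildly delicate step is verifying that the pointwise restriction commutes with the $\mathfrak{H}_K^{\HQ}$-convergent series, i.e.\ that $h|_\fn = \sum_m \alpha_m \Phi_m|_\fn$ pointwise on $\C_\fn$; this is the main (but minor) obstacle. It follows from the reproducing kernel property: for fixed $z\in\C_\fn$, evaluation at $z$ is a bounded $\HQ$-linear functional on $\mathfrak{H}_K^{\HQ}$ (implemented by the kernel (\ref{ee4})), hence continuous, and so it can be interchanged with the convergent sum. With this observation the proposition follows.
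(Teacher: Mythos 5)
Your proof is correct, but it takes a genuinely different route from the paper. The paper's entire proof is a one-line appeal to Remark (\ref{RRem1}): a slice-regular function restricts to a regular function on each slice $\Cn$, and membership of $h\vert_\fn$ in $\mathfrak H_K^\fn$ is then declared ``straightforward.'' You instead avoid regularity altogether and argue through the orthonormal expansion: $h=\sum_m\alpha_m\Phi_m$ with $\sum_m|\alpha_m|^2<\infty$, the identity $\Phi_m\vert_\fn=\tfrac{1}{\sqrt{2\pi}}U_m$ (which you compute correctly from (\ref{ee6}) and (\ref{ee7}); note the paper itself silently drops this $\tfrac{1}{\sqrt{2\pi}}$ factor in the proofs of Propositions \ref{p51} and \ref{p47}), Parseval's relation to get norm convergence of $\sum_m\tfrac{\alpha_m}{\sqrt{2\pi}}U_m$ in $\mathfrak H_K^\fn$, and continuity of the RKHS evaluation functionals to identify that limit with the pointwise restriction $h\vert_\fn$. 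Your route buys something real: $\mathfrak H_K^\fn$ is by definition the closed left span of the monomials $U_m$ inside $L^2_\HQ(\Cn,d\nu(z,\oz))$, not the full space of square-integrable regular functions on the slice, so knowing only that $h\vert_\fn$ is regular and square-integrable does not by itself place it in $\mathfrak H_K^\fn$; your expansion exhibits it explicitly as an element of that span. The price is length, and a reliance on the interchange of restriction with the convergent series, which you correctly flag and justify via the reproducing kernel. In short, your argument is a rigorous filling-in of a step the paper waves through, and it is consistent with (indeed anticipates) the computations the paper performs in the two subsequent propositions.
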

\begin{proof}
In view of Remark(\ref{RRem1}), it is straightforward.
\end{proof}
Define
$$\Psi_m~:~\Ps\longrightarrow\bigcup_{\fn\in\Ps}\mathfrak H_{K}^{\fn}\quad\text{by}\quad\Psi_m(\fn)=U_m,\quad\text{where}\quad U_m(z)
=\frac{\left(re^{\fn\theta}\right)^m}{\sqrt{2\pi x_m!}},$$
as in equation (\ref{ee6-b}).
Then clearly
$$\Psi_m(\fn)\in\mathfrak{H}_K^\fn;~~~~~~\forall m\in\mathbb{N},~~\forall \fn\in\Ps,$$
and thereby $\displaystyle \left\{\Psi_m\right\}_{m=1}^{\infty}\subseteq\prod_{\fn\in\Ps}\mathfrak{H}_K^\fn.$
\begin{proposition}\label{p45}
For each $m$ and $n$ the quaternion valued function $\fn\longrightarrow\langle\Psi_m(\fn)\vert\Psi_n(\fn)\rangle_{\fn}$ is $\nu$-measurable.
\end{proposition}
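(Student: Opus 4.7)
The plan is to observe that the function $\fn \mapsto \langle \Psi_m(\fn)\mid\Psi_n(\fn)\rangle_\fn$ is in fact the constant function $\delta_{mn}$, which makes the measurability claim essentially automatic. The substance of the argument is the orthonormality of the $U_m$'s on each slice, which is already recorded as property (3) of the $U_m$'s in the statement preceding (\ref{eq7}); the proposition is really just the observation that this orthonormality is uniform in $\fn$.

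To give a self-contained derivation, I would first note that the inner product on $\mathfrak{H}_K^\fn$ is the one inherited from the ambient space $L^2_\HQ(\Cn, d\nu(z,\oz))$, so
$$\langle \Psi_m(\fn)\mid\Psi_n(\fn)\rangle_\fn \;=\; \langle U_m\mid U_n\rangle_\fn \;=\; \int_{\Cn} U_m(z)\,\overline{U_n(z)}\,d\nu(z,\oz).$$
Parametrizing the slice by $z = re^{\fn\vartheta}$ with $r\in(0,\infty)$ and $\vartheta\in[0,2\pi)$, and using the product form $d\nu(z,\oz)=d\mu(r^2)\,d\vartheta$ from (\ref{moment-meas}), the integrand becomes
$$U_m(z)\,\overline{U_n(z)} \;=\; \frac{r^{m+n}\,e^{\fn(m-n)\vartheta}}{2\pi\sqrt{x_m!\,x_n!}},$$
where commutativity inside $\Cn$ (Remark \ref{Re11}) is what lets us collect the two factors, and $\overline{e^{\fn\vartheta}} = e^{-\fn\vartheta}$ follows from $\fn^{2}=-1$.

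The angular integral $\int_0^{2\pi} e^{\fn(m-n)\vartheta}\,d\vartheta$ equals $2\pi$ when $m=n$ and vanishes otherwise, exactly as in the complex case (expand $e^{\fn k\vartheta}=\cos k\vartheta+\fn\sin k\vartheta$ and integrate). The remaining radial integral gives $\mu_m/x_m! = 1$ by the normalization $x_m! = \mu_m$. Consequently
$$\fn \;\longmapsto\; \langle \Psi_m(\fn)\mid\Psi_n(\fn)\rangle_\fn \;=\; \delta_{mn},$$
a constant function on $\Ps$ and therefore $\nu$-measurable for every choice of $m$ and $n$.

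There is no real obstacle here. The only point that might require a word of comment is the independence of the value on $\fn$: this relies on the fact that the measure $d\nu$ is prescribed by the same formula $d\mu(r^2)\,d\vartheta$ on each slice, so although $\Cn$ varies with $\fn$, the integral over it does not. Once this is noted, measurability is immediate.
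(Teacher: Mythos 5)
Your proposal is correct and follows essentially the same route as the paper: both arguments reduce the claim to the slice-wise orthonormality $\langle U_m\mid U_n\rangle_{\fn}=\delta_{mn}$, conclude that the function is constant in $\fn$, and hence $\nu$-measurable. You merely spell out the angular and radial integrals that the paper leaves implicit.
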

\begin{proof}
For each $m$ and $n$, we have
\begin{eqnarray*}
\langle\Psi_m(\fn)\vert\Psi_n(\fn)\rangle_{\fn}&=&\langle U_m\vert U_n\rangle_{\fn}\\
&=&\int_{\C_\fn}U_m(z)\overline{U_n(z)}d\nu(z,\oz)\\
&=&\int_{\C_\fn}\frac{(re^{\fn\theta})^m\overline{(re^{\fn\theta})}^n}
{\sqrt{2\pi x_m!}\sqrt{2\pi x_n!}}d\nu(z,\oz)\\
&=&\delta_{mn}.
\end{eqnarray*}
Therefore $\fn\longrightarrow\langle\Psi_m(\fn)\vert\Psi_n(\fn)\rangle_{\fn}$ is a constant function, thereby it is $\nu$-measurable.
\end{proof}

\begin{proposition}\label{p46}
For each $\fn\in\Ps$, the sequence of vectors $\displaystyle\left\{\Psi_n(\fn)\right\}_{n=0}^{\infty}$ is total in $\mathfrak{H}_K^\fn$
\end{proposition}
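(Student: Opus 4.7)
My approach is to observe that this proposition is essentially an immediate restatement of the density result already established in Section~4 for each fiber $\mathfrak H_K^\fn$. Recall from equations (\ref{eq16}) and (\ref{ee6-b}) that the set
\[
B_\fn \;=\; \left\{\frac{z^m}{\sqrt{2\pi\, x_m!}}\;\Big|\; z\in\C_\fn,\; m\in\mathbb{N}\right\}
\]
is total in $\mathfrak H_K^\fn$; that is, its closed left quaternionic span equals $\mathfrak H_K^\fn$. So the real content of the proof is simply to identify $\{\Psi_n(\fn)\}_{n=0}^\infty$ with $B_\fn$ on each slice.

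The first step would be to unwind the definition of $\Psi_m$: by construction $\Psi_m(\fn) = U_m$, where $U_m\in\mathfrak H_K^\fn$ is the function $z\mapsto z^m/\sqrt{2\pi\, x_m!}$ on $\C_\fn$ from equation (\ref{ee7}). Hence for any fixed $\fn\in\Ps$, the sequence $\{\Psi_n(\fn)\}_{n=0}^\infty$ is literally the enumeration $\{U_n\}_{n=0}^\infty$ of $B_\fn$.

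The second step is to invoke the density of this set: by (\ref{eq16}), $\overline{\text{left span over }\HQ}\{U_n : n\in\mathbb{N}\}=\mathfrak H_K^\fn$, which is precisely the statement that $\{\Psi_n(\fn)\}_{n=0}^\infty$ is total in $\mathfrak H_K^\fn$. If desired, one can also phrase this via Theorem~\ref{T6}(i): since $\{U_n\}$ is orthonormal in $\mathfrak H_K^\fn$ (property~3 after (\ref{ee7})), a vector $f\in\mathfrak H_K^\fn$ annihilated by all $\langle U_n|\,\cdot\,\rangle_\fn$ would be orthogonal to the dense set $B_\fn$ and thus vanish.

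I do not anticipate any genuine obstacle here; the proposition is in effect bookkeeping that packages the construction of Section~4 into the language of Definition~\ref{d41} and Lemma~\ref{141}, so that together with Proposition~\ref{p45} the measurable-field hypotheses are satisfied. The only point worth being careful about is that the identification $\Psi_m(\fn)=U_m$ is uniform in $\fn$, which is built into the definition of $\Psi_m$ through the formula $z=x+\fn y$ and therefore requires no further justification.
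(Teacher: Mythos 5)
Your proposal is correct and follows exactly the paper's own route: the paper's entire proof is the one-line remark that the claim follows from equation (\ref{ee6-b}), i.e.\ from the totality of $B_{\fn}=\{U_m\}$ in $\mathfrak{H}_K^{\fn}$, which is precisely the identification and invocation you spell out. Your extra detail (unwinding $\Psi_m(\fn)=U_m$ and the optional appeal to Theorem~\ref{T6}) merely makes explicit what the paper leaves implicit.
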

\begin{proof}
It follows from Equation (\ref{ee6-b}).
\end{proof}
The above two propositions assist Lemma (\ref{141}), thereby $\displaystyle\left\{\mathfrak{H}_K^\fn~~\vert~~\fn\in\Ps\right\}$ is a measurable field of Hilbert spaces and
$$\mathfrak {M}_K=\{\Psi\in\prod_{\fn\in \Ps}\mathfrak{H}_K^\fn~\mid~\fn\longmapsto\left\langle \Psi_n(\fn)\mid \Psi(\fn)\right\rangle_{\fn}~~\mbox{~~is~~}~\nu-\mbox{measurable for all~~}~n\}$$
is a $\nu$-measurable field of vectors.
\begin{proposition}\label{p47}$\displaystyle\overline{\mathfrak{H}}_K\subseteq\mathfrak{M}_K$.
\end{proposition}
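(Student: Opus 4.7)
The plan is to fix an arbitrary $\Psi\in\overline{\mathfrak{H}}_K$, write it as $\Psi=(h|_\fn)_{\fn\in\Ps}$ for some $h\in\mathfrak{H}_K^\HQ$, and then verify the two requirements for membership in $\mathfrak{M}_K$: that $\Psi(\fn)\in\mathfrak{H}_K^\fn$ for every $\fn\in\Ps$, and that for each $n\in\mathbb{N}$ the quaternion-valued map $\fn\longmapsto\langle\Psi_n(\fn)\mid\Psi(\fn)\rangle_\fn$ is $\nu$-measurable. The first containment is already supplied by Proposition~\ref{p44}, so everything reduces to checking the measurability condition.

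To do this, I use the totality of $B_\HQ$ in $\mathfrak{H}_K^\HQ$ (equation~\eqref{ee6-a}) to expand
\begin{equation*}
h=\sum_{m=0}^{\infty}\alpha_m\Phi_m,\qquad \alpha_m\in\HQ,
\end{equation*}
with coefficients $\alpha_m$ that are \emph{independent} of $\fn$. Restricting to the slice $\Cn$ and observing that $\Phi_m|_\fn=\frac{1}{\sqrt{2\pi}}U_m$ (compare equations~\eqref{on-set} and~\eqref{ee7}), we obtain $h|_\fn=\frac{1}{\sqrt{2\pi}}\sum_{m=0}^{\infty}\alpha_m U_m$, with the series convergent in $\mathfrak{H}_K^\fn$ thanks to the isometric isomorphism of Proposition~\ref{p51}. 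Then, using the right $\HQ$-antilinearity of $\langle\cdot\mid\cdot\rangle_\fn$ on the left quaternionic Hilbert space $V_\fn^L$, the continuity of the inner product, and the orthonormality $\langle U_m\mid U_n\rangle_\fn=\delta_{mn}$ established in Proposition~\ref{p45}, I compute
\begin{equation*}
\langle\Psi_n(\fn)\mid\Psi(\fn)\rangle_\fn
=\left\langle U_n \,\Big|\, \tfrac{1}{\sqrt{2\pi}}\sum_{m=0}^{\infty}\alpha_m U_m\right\rangle_\fn
=\frac{1}{\sqrt{2\pi}}\sum_{m=0}^{\infty}\delta_{nm}\,\overline{\alpha_m}
=\frac{\overline{\alpha_n}}{\sqrt{2\pi}}.
\end{equation*}
Because the right-hand side does not depend on $\fn$, the function in question is constant on $\Ps$, hence trivially $\nu$-measurable. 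This concludes the argument: $\Psi\in\mathfrak{M}_K$, and therefore $\overline{\mathfrak{H}}_K\subseteq\mathfrak{M}_K$.

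The only place requiring genuine care — and hence the main obstacle — is justifying that the expansion of $h$ in the basis $\{\Phi_m\}$ may be restricted term-by-term to yield a convergent expansion of $h|_\fn$ in the basis $\{U_m\}$ of $\mathfrak{H}_K^\fn$ with the \emph{same} coefficients (up to the universal factor $1/\sqrt{2\pi}$). This step is what ties the computation together and makes the inner product constant in $\fn$; it rests on the isometric isomorphism $\varphi$ from Proposition~\ref{p51}, which sends $\Phi_m\mapsto(\Phi_m|_\fn)_{\fn\in\Ps}$ and thereby transports orthogonal expansions in $\mathfrak{H}_K^\HQ$ slice-wise into orthogonal expansions in each $\mathfrak{H}_K^\fn$.
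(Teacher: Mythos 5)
Your proposal is correct and follows essentially the same route as the paper's own proof: expand $h$ in the $\fn$-independent basis $\{\Phi_m\}$, restrict slice-wise to an expansion in $\{U_m\}$, and conclude that $\fn\mapsto\langle U_m\mid\Psi(\fn)\rangle_\fn$ is constant, hence $\nu$-measurable. You are in fact slightly more careful than the paper, which drops the normalization factor $1/\sqrt{2\pi}$ relating $\Phi_m|_\fn$ to $U_m$ and writes $\alpha_m$ where the left-Hilbert-space conventions give $\overline{\alpha_m}$; neither affects the conclusion.
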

\begin{proof}Let $\Psi\in\overline{\mathfrak{H}}_K$, then
$$\Psi=\left(h\vert_{\fn}\right)_{\fn\in\Ps}\quad\text{for some}\quad h\in\mathfrak{H}_K^\HQ.$$
Since $\quad h\in\mathfrak{H}_K^\HQ$, there exists $\{\alpha_n\}\subset \HQ$ (they do not depend on $\fn\in\Ps$) such that
$$h=\sum_{n=0}^{\infty}\alpha_n\Phi_n$$
and so for each $\fn\in\Ps$,
$$\Psi(\fn)=h\vert_{\fn}=\sum_{n=0}^{\infty}\alpha_nU_n.$$
Hence $\langle U_m\vert\Psi(\fn)\rangle_\fn=\alpha_m$ and $\alpha_m$ is a constant with respect to $\fn\in\Ps$.
Thereby, $\fn\longrightarrow \langle U_m\vert\Psi(\fn)\rangle_\fn$ is a constant function and so it is $\nu$-measurable.
 Therefore, $\Psi\in\mathfrak{M}_K$ and conclusion follows.
\end{proof}
\begin{proposition}\label{p48}
The maximal subspace of $\mathfrak{M}_K$ satisfying
\begin{equation}\label{m2}
\|\Phi\|^2:=\int_{\Ps}\|\Phi(\fn)\|_\fn^2d\Omega(\theta,\phi)<\infty
\end{equation}
is $\overline{\mathfrak{H}}_K$.
\end{proposition}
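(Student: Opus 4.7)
The plan is to establish a double containment: $\overline{\mathfrak{H}}_K$ sits inside the set of $\Phi\in\mathfrak{M}_K$ with $\|\Phi\|^2<\infty$, and no strictly larger subspace is possible. For the easy direction, Proposition~\ref{p47} gives $\overline{\mathfrak{H}}_K\subseteq\mathfrak{M}_K$, and the calculation inside the proof of Proposition~\ref{p51} yields, for $\Phi=\varphi(h)=(h|_\fn)_{\fn\in\Ps}\in\overline{\mathfrak{H}}_K$,
\[
\|\Phi\|^2=\int_\Ps\|h|_\fn\|_\fn^2\,d\Omega(\theta,\phi)=2\pi\|h\|_{\mathfrak{H}_K^\HQ}^2<\infty,
\]
so (\ref{m2}) holds on all of $\overline{\mathfrak{H}}_K$, placing it inside the candidate maximal subspace.

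For maximality, take $\Phi\in\mathfrak{M}_K$ with $\|\Phi\|^2<\infty$. Proposition~\ref{p46}, property~3 of $U_m$ in Section~4.2, and Theorem~\ref{T6} together show that $\{U_m\}_{m\geq 0}$ is an orthonormal basis of every fibre $\mathfrak{H}_K^\fn$, so I can expand
\[
\Phi(\fn)=\sum_{m=0}^\infty c_m(\fn)\,U_m,\qquad c_m(\fn)=\langle\Phi(\fn)\mid U_m\rangle_\fn,
\]
with each $c_m$ being $\nu$-measurable by the defining condition of $\mathfrak{M}_K$. Parseval (Theorem~\ref{T6}(ii)) combined with Fubini converts the finite-norm hypothesis into $\sum_m\int_\Ps|c_m(\fn)|^2\,d\Omega(\theta,\phi)<\infty$. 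If a single $\HQ$-valued sequence $\alpha_m$ can be identified as the $\fn$-independent value of $c_m(\fn)$, then $\sum_m|\alpha_m|^2<\infty$ (using $\int_\Ps d\Omega=2\pi$), so $h:=\sum_m\alpha_m\Phi_m$ is a well-defined element of $\mathfrak{H}_K^\HQ$ with $\varphi(h)=\Phi$, placing $\Phi$ in $\overline{\mathfrak{H}}_K$.

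The main obstacle is precisely justifying the $\fn$-independence of the expansion coefficients $c_m$. I would argue it from slice-regularity: each $\Phi(\fn)\in\mathfrak{H}_K^\fn$ is (complex-)analytic on $\C_\fn$, so the slice restriction is determined by its values on the shared real axis $\R=\bigcap_\fn\C_\fn$; coherence across slices through the isomorphism of Proposition~\ref{piso} and the reproducing-kernel structure of $\mathfrak{H}_K^\HQ$ should force the $c_m(\fn)$ to be $d\Omega$-a.e.\ constant, giving the required $\alpha_m$. This analytic-continuation step is delicate because vector fields in $\mathfrak{M}_K$ are not \emph{a priori} restrictions of a single quaternion-regular function, so tightening the measurable-field conditions into a compatibility condition across slices is where the serious work lies.
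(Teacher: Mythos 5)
Your first direction is fine and in fact sharper than the paper's: the paper merely notes that each $\|\Phi(\fn)\|_\fn$ is finite and appeals to the finiteness of the measure of $\Ps$, whereas your use of the isometry computation from Proposition~\ref{p51} gives the explicit bound $\|\Phi\|^2=2\pi\|h\|^2_{\mathfrak{H}_K^\HQ}<\infty$. The problem is the maximality direction, where you have correctly located the crux --- producing a single $h\in\mathfrak{H}_K^\HQ$ whose expansion coefficients $\alpha_m$ are $\fn$-independent --- but you do not prove it, and the route you sketch cannot close it. Membership in $\mathfrak{M}_K$ only requires the functions $\fn\longmapsto\langle U_m\mid\Phi(\fn)\rangle_\fn$ to be measurable; it imposes no compatibility between slices. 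Concretely, $\Phi(\fn)=c(\fn)U_0$ with $c:\Ps\to\mathbb{R}$ bounded, measurable and non-constant lies in $\mathfrak{M}_K$, satisfies (\ref{m2}), yet is not of the form $(h|_\fn)_{\fn\in\Ps}$ for any $h\in\mathfrak{H}_K^\HQ$, since such restriction tuples always have slice-independent coefficients. The analytic-continuation idea does not rescue this: each $\Phi(\fn)$ is indeed analytic on its own slice, but nothing in the definition of $\mathfrak{M}_K$ forces $\Phi(\fn)$ and $\Phi(\fn')$ to agree on the common real axis, so there is no coherent function to continue.

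For comparison, the paper's own proof of this direction is essentially a one-liner: from (\ref{m2}) it deduces $\|\Phi(\fn)\|_\fn<\infty$, hence $\Phi(\fn)\in\mathfrak{H}_K^\fn$ (which already holds by the definition of the product space), and then simply asserts $\Phi\in\overline{\mathfrak{H}}_K$. It therefore passes silently over exactly the coherence issue you flagged. So your diagnosis of where the real difficulty lies is accurate --- arguably more candid than the paper --- but neither your sketch nor the paper supplies the missing argument, and as written your proof is incomplete at the decisive step.
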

\begin{proof} Since $\mathfrak{H}_K^\fn\subseteq L_\HQ^2(\C_\fn,d\nu(z,\oz))$ for all $\fn\in\Ps$, we have
\begin{equation}\label{m3}
\|h\vert_\fn\|_\fn<\infty\quad\forall \fn\in\Ps~~~~~\text{and}~~~\forall h\in\mathfrak{H}_K^\HQ.
\end{equation}
Let $\Phi\in\overline{\mathfrak{H}}_K$ then $\displaystyle\Phi=(h\vert_{\fn})_{\fn\in\Ps}$ for some $h\in\mathfrak{H}_K^\HQ.$
That is, $\displaystyle\Phi(\fn)=h\vert_{\fn};\quad\forall \fn\in\Ps$. Thereby, $\|\Phi(\fn)\|_\fn^2<\infty$ for all $\fn\in\Ps$, and since $\Ps$ is compact, we have
$$\|\Phi\|^2:=\int_{\Ps}\|\Phi(\fn)\|_\fn^2d\Omega(\theta, \phi)<\infty.$$
Conversly, since  $d\Omega$ is a positive measure and $\Ps$ is a set of finite measure, if equation (\ref{m2}) holds, then $\|\Phi(\fn)\|_\fn^2<\infty$. Thus $\Phi(\fn)\in L_\HQ^2(\C_\fn, d\nu(z, \oz))$, in addition, if $\Phi\in\mathfrak{M}_K$, then $\Phi(\fn)$ must be a right-regular function. Therefore $\Phi(\fn)\in\mathfrak{H}_K^\fn$, for all $\fn\in\Ps$. Hence $\Phi\in\overline{\mathfrak{H}}_K$, which completes the proof.
\end{proof}
From the above arguments, now, one can write
\begin{equation}\label{m4}
\mathfrak{H}_K^\HQ\approxeq\overline{\mathfrak{H}}_K=\int_{\Ps}^{\oplus}\mathfrak{H}_K^\fn d\Omega(\theta, \phi).
\end{equation}

\section{Group representations}
In this section we shall obtain a set of irreducible unitary group representations on the representation space $V_{\fn}^L$, $\fn\in\Ps$ and show that their direct integral becomes a reducible representation of the bigger space $V_\HQ^L$.\\
For the standard complex harmonic oscillator canonical CS
\begin{equation}\label{G1}
|z\rangle =e^{-\frac{|z|^2}{2}}\sum_{m=0}^{\infty}\frac{z^m}{\sqrt{m!}}|n\rangle,
\end{equation}
we have the annihilation and creation operators as $a|n\rangle=\sqrt{n}|n-1\rangle$ and $a^{\dagger}|n\rangle=\sqrt{n+1}|n+1\rangle$ respectively. In this case we also have $a|z\rangle=z|z\rangle$. Further using the Baker-Campbell-Hausdorff identity, $\displaystyle e^{A+B}=e^{-\frac{1}{2}[A,B]}e^Ae^B$, when $A$ and $B$ commute with $[A,B]$, we have
$\displaystyle |z\rangle=e^{za^{\dagger}-\overline{z}a}|0\rangle$. Now by taking $\displaystyle z=\frac{q-ip}{\sqrt{2}}$ we can write
\begin{equation}\label{G2}
|z\rangle=e^{i(pQ-qP)}|0\rangle=U(q,p)|0\rangle,
\end{equation}
where $\displaystyle Q=\frac{a+a^{\dagger}}{\sqrt{2}},~~P=\frac{a-a^{\dagger}}{\sqrt{2}}$ and $U(q,p)$ is a unitary operator arising from a unitary, irreducible representation of the Weyl-Heisenberg group \cite{Alibk, Gaz}.
\subsection{Operators on left quaternion Hilbert spaces} Let $\mathcal{O}:V_\HQ^L\longrightarrow V_\HQ^L$ be a quaternion linear operator. In this case, the operators always act from the left as $\mathcal{O}|f\rangle$ and the scalar multiple of the operator is taken from the left as $\qu\mathcal{O}$. Note that the quaternion scalar multiples of an operator do not obey several properties of their complex counterpart \cite{Mu}.  Further the operators obey the following rules:
\begin{enumerate}
\item[(i)]$\displaystyle\mathcal{O}|\qu f\rangle=\qu(\mathcal{O}|f\rangle).$
\item[(ii)]$\displaystyle(\qu\mathcal{O})|f\rangle=\mathcal{O}|\qu f\rangle.$
\end{enumerate}
For a detail explanation we refer the reader to \cite{Ad}. Note that for $\qu\in \HQ$, in general $(\qu\mathcal{O})^{\dagger}\not=\oqu\mathcal{O}^{\dagger}$. However, for $\qu\in\Cn$ and $\mathcal{O}:V_\fn^L\longrightarrow V_\fn^L$ we have $(\qu\mathcal{O})^{\dagger}=\oqu\mathcal{O}^{\dagger}$. For details see \cite{Mu}.
\subsection{Group representation on $V_\HQ^L$} Let $\displaystyle\{f_m\}_{m=0}^{\infty}$ be an orthonormal basis of $V_\HQ^L$. Define the annihilation, creation and number operators,
$$a_L,a_L^{\dagger}, N_L:V_\HQ^L\longrightarrow V_\HQ^L,$$ as usual by
\begin{eqnarray*}
a_Lf_m&=&\sqrt{m}f_{m-1};~~~a_Lf_0=0,\\
a_L^{\dagger}f_m&=&\sqrt{m+1}f_{m+1},\\
N_Lf_m&=&mf_m.
\end{eqnarray*}
One can easily see that $a_L^{\dagger}$ is the adjoint of $a_L$ and $N_L=a_L^{\dagger}a_L$. Further, for the states in (\ref{quat-cs1}), in \cite{Thi}, we showed that $\displaystyle a_L\eta_{\qu}=\qu\eta_{\qu}$ and
\begin{equation}\label{G3}
\eta_{\qu}=e^{\qu a_L^{\dagger}-\overline{\qu}a_L}f_0.
\end{equation}
However, we also proved in \cite{Thi} that the operator in (\ref{G3}) cannot be identified as a group representation of the representation space $V_\HQ^L$ (see also \cite{AdMil} for the same result in a different view).
\subsection {Group representation on $V_{\fn}^L$} Let $\displaystyle\{g_m\}_{m=0}^{\infty}$ be an orthonormal basis of $V_{\fn}^L$. Define the left-annihilation, left-creation and left-number operators,
$$a_\fn,a_\fn^{\dagger}, N_\fn:V_{\fn}^L\longrightarrow V_{\fn}^L,$$ as usual by
\begin{eqnarray*}
a_\fn g_m&=&\sqrt{m}g_{m-1};~~~a_\fn f_0=0,\\
a_\fn^{\dagger}g_m&=&\sqrt{m+1}g_{m+1},\\
N_\fn g_m&=&mg_m.
\end{eqnarray*}
One can easily see that $a_\fn^{\dagger}$ is the adjoint of $a_\fn$ and $N_\fn=a_\fn^{\dagger}a_\fn$. Further, for $\qu\in \C_\fn$, for the states
\begin{equation}\label{G4}
\eta_{\qu}=e^{-\frac{|\qu|^2}{2}}\sum_{m=0}^{\infty}\frac{\qu^m}{\sqrt{m!}}g_m\in V_{\fn}^L,
\end{equation}
as in \cite{Thi}, we can show that $\displaystyle a_\fn\gamma_{\qu}=\qu\gamma_{\qu}$ and
\begin{equation}\label{G5}
\gamma_{\qu}=e^{\qu a_\fn^{\dagger}-\overline{\qu}a_\fn}g_0.
\end{equation}
Here, not as in the case of the Hilbert space $V_\HQ^L$, we shall show that the operator in (\ref{G5}) is a unitary operator and it arises from a unitary irreducible representation of the Weyl-Heisenberg group.\\

The operators $a_\fn, a_\fn^{\dagger}$ and $N_\fn$ satisfy the usual commutation relations,
\begin{equation}\label{G6}
[a_\fn,a_\fn^{\dagger}]=\mathbb{I}_\fn,\quad[N_\fn,a_\fn]=-a_\fn,\quad [N_\fn,a_\fn^{\dagger}]=a_\fn^{\dagger},
\end{equation}
where $\mathbb{I}_\fn$ is the identity operator on $V_{\fn}^L$.
Since the elements of $\Cn$ commute, unlike in the case of $\HQ$ (see \cite{Thi}), the algebra,
\begin{equation}\label{G7}
\mathcal{A}_{WH}=\text{linear span over} \Cn\left\{a_\fn,a_\fn^{\dagger}, \mathbb{I}_\fn\right\}
\end{equation}
is closed under $\Cn$ and is a version of the Weyl-Heisenberg algebra. Further, the operator $X=\fn y\mathbb{I}_\fn+(\qu a^{\dagger}_\fn-\overline{\qu}a_\fn)$ is anti-self adjoint in $V_{\fn}^L$ and is the infinitessimal generator of the operator
\begin{equation}\label{G8}
e^X=e^{\fn y\mathbb{I}_\fn}e^{\qu a^{\dagger}_\fn-\overline{\qu}a_\fn}=e^{\fn y\mathbb{I}_\fn}D(\qu).
\end{equation}

\begin{proposition}\label{PG1}
For $\qu\in \Cn$, the operator $D(\qu)=\displaystyle e^{\qu a_\fn^{\dagger}-\overline{\qu}a_\fn}$ is unitary.
\end{proposition}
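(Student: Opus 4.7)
The plan is to show that $D(\qu)$ is the exponential of an anti-self-adjoint operator and therefore unitary, exactly as in the complex case. Set $A = \qu a_\fn^{\dagger} - \overline{\qu} a_\fn$, so that $D(\qu) = e^A$. The whole computation takes place on $V_\fn^L$, where the scalar field $\C_\fn$ is commutative (Remark~\ref{Re11}) and the adjoint rule $(\qu\mathcal{O})^{\dagger} = \overline{\qu}\,\mathcal{O}^{\dagger}$ for $\qu\in\C_\fn$ holds, as recorded just before subsection 6.2.

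First I would verify the anti-self-adjointness of $A$. Using the above adjoint rule together with the fact that $a_\fn$ and $a_\fn^{\dagger}$ are mutually adjoint on $V_\fn^L$,
\begin{equation*}
A^{\dagger} = (\qu a_\fn^{\dagger})^{\dagger} - (\overline{\qu} a_\fn)^{\dagger} = \overline{\qu}\,a_\fn - \qu\,a_\fn^{\dagger} = -A.
\end{equation*}
Next I would argue that $A$ is densely defined with a natural invariant dense domain (the $\C_\fn$-finite span of the basis $\{g_m\}$), on which the defining power series $e^{A} = \sum_{k\ge 0} A^k/k!$ makes sense as a strong limit; this is justified because on coherent-state-type domains the action of $\qu a_\fn^{\dagger}-\overline{\qu}a_\fn$ produces vectors whose norms grow in a Gaussian-controlled manner.

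Then I would take adjoints term by term in the exponential series. Since taking the adjoint is continuous for operators restricted to a suitable common invariant dense domain, one obtains
\begin{equation*}
D(\qu)^{\dagger} = \Big(\sum_{k=0}^{\infty}\tfrac{A^k}{k!}\Big)^{\dagger} = \sum_{k=0}^{\infty}\tfrac{(A^{\dagger})^k}{k!} = e^{A^{\dagger}} = e^{-A}.
\end{equation*}
Finally, because $A$ commutes with itself, the standard one-parameter-group identity $e^{A}\,e^{-A} = e^{-A}\,e^{A} = \mathbb{I}_\fn$ applies in this slice setting (it reduces to a purely complex computation once we view $V_\fn^L$ as a complex-like Hilbert space over $\C_\fn$ per Remark~\ref{Re12}). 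Hence $D(\qu)^{\dagger}D(\qu) = D(\qu)D(\qu)^{\dagger} = \mathbb{I}_\fn$, i.e.\ $D(\qu)$ is unitary.

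The only subtle point, and the one I would present most carefully, is the interchange of adjoint with the infinite series and the use of $e^{A}e^{-A}=\mathbb{I}_\fn$ in the quaternionic setting. Both would be justified by first checking the identity on the orthonormal basis $\{g_m\}$ (equivalently, on the dense subspace of finite $\C_\fn$-linear combinations), where the BCH-type identity $e^{\qu a_\fn^{\dagger}-\overline{\qu}a_\fn} = e^{-|\qu|^2/2}e^{\qu a_\fn^{\dagger}}e^{-\overline{\qu}a_\fn}$ is valid since $[\qu a_\fn^{\dagger}, -\overline{\qu}a_\fn] = |\qu|^2\mathbb{I}_\fn$ is central and $\qu,\overline{\qu}\in\C_\fn$ commute with each other. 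This factored form makes the direct computation $\|D(\qu)g_m\| = \|g_m\|$ on basis vectors straightforward, after which unitarity extends by density.
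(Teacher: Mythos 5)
Your proposal is correct and follows essentially the same route as the paper: set $A=\qu a_\fn^{\dagger}-\overline{\qu}a_\fn$, verify $A^{\dagger}=-A$ using the slice adjoint rule $(\qu\mathcal{O})^{\dagger}=\overline{\qu}\mathcal{O}^{\dagger}$ for $\qu\in\C_\fn$, and conclude $D(\qu)D(\qu)^{\dagger}=e^{A}e^{-A}=\mathbb{I}_\fn$ via the Baker--Campbell--Hausdorff identity. Your additional care about dense invariant domains and term-by-term adjoints only makes explicit what the paper leaves implicit.
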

\begin{proof}
Let $A=\qu a_\fn^{\dagger}-\overline{\qu}a_\fn$, then $A^{\dagger}=\overline{\qu} a_\fn-\qu a_\fn^{\dagger}=-A$ and $\displaystyle (e^A)^{\dagger}=e^{A^{\dagger}}$. Therefore, from the Baker-Campbell-Hausdorff formula we have
$$D(\qu)D(\qu)^{\dagger}=e^A(e^A)^{\dagger}=e^Ae^{-A}=e^{\frac{1}{2}[A,-A]}e^{A+(-A)}=\mathbb{I}_\fn.$$
\end{proof}
\begin{proposition}\label{PG-1}
For $\qu\in \Cn$, the operator $D(\qu)=\displaystyle e^{\qu a_\fn^{\dagger}-\overline{\qu}a_\fn}$ is a unitary representation of the representation space $V_{\fn}^L$.
\end{proposition}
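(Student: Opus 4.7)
The plan is to realize $D(\qu)$ as the image of a genuine unitary irreducible representation of a quaternionic Weyl-Heisenberg group on $V_\fn^L$. The commutativity of $\Cn$ (Remark \ref{Re11}) is what makes this transfer from the complex case largely mechanical. First I would compute the composition $D(\qu_1)D(\qu_2)$ via Baker-Campbell-Hausdorff; this identifies the correct cocycle. Then I would introduce the associated central extension $G_{WH}^\fn$ and verify that $(t,\qu)\mapsto e^{\fn t \mathbb{I}_\fn}D(\qu)$ is a homomorphism into the unitary operators on $V_\fn^L$. Finally I would invoke a cyclic-vector argument to conclude irreducibility.

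For the BCH step, set $A_j := \qu_j a_\fn^\dagger - \overline{\qu_j} a_\fn$. The canonical commutation relations (\ref{G6}) together with the commutativity of $\Cn$-scalars give
\[
[A_1,A_2] = (\qu_1\overline{\qu_2} - \overline{\qu_1}\qu_2)\mathbb{I}_\fn .
\]
Writing $\qu_j = x_j + \fn y_j$ one sees $\qu_1\overline{\qu_2} - \overline{\qu_1}\qu_2 = 2\fn(y_1 x_2 - x_1 y_2) \in \fn\mathbb{R}$, so the commutator is central. Since it commutes with each $A_j$, the classical BCH identity passes through verbatim and yields
\[
D(\qu_1)D(\qu_2) = \exp\!\bigl(\tfrac{1}{2}(\qu_1\overline{\qu_2} - \overline{\qu_1}\qu_2)\mathbb{I}_\fn\bigr)\, D(\qu_1 + \qu_2).
\]
I would then let $G_{WH}^\fn := \mathbb{R}\times\Cn$ with product $(t_1,\qu_1)(t_2,\qu_2) := (t_1 + t_2 + (y_1 x_2 - x_1 y_2),\, \qu_1 + \qu_2)$, verify routinely that this is a group, and define $U(t,\qu) := e^{\fn t \mathbb{I}_\fn}D(\qu)$. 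The BCH identity above gives $U(g_1)U(g_2) = U(g_1 g_2)$; unitarity of each $U(t,\qu)$ follows from the anti-self-adjointness of $\fn t \mathbb{I}_\fn$ combined with Proposition \ref{PG1}; strong continuity in $(t,\qu)$ follows from the convergent power series defining $D(\qu)$.

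For irreducibility I would exploit the resolution of the identity (\ref{quat-resolid2}): the orbit $\{D(\qu)g_0 : \qu\in\Cn\}$ is total in $V_\fn^L$, so any non-zero closed $\Cn$-invariant subspace containing $g_0$ must coincide with $V_\fn^L$, and a standard Schur-type argument (replacing $g_0$ by the projection of an arbitrary cyclic vector) extends this to rule out proper non-trivial invariant subspaces. The main obstacle will be justifying that Baker-Campbell-Hausdorff really does apply in this quaternionic setting: generic quaternion scalars fail to commute and left-scalar multiplication on operators interacts badly with taking adjoints (cf.\ the cautionary remarks in Section 6.1), but in our situation every scalar that appears lies in the commutative slice $\Cn$ and $[A_1,A_2]$ is a central multiple of $\mathbb{I}_\fn$, which is exactly what the textbook derivation of BCH requires. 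A secondary, easy, check is that the cocycle $y_1 x_2 - x_1 y_2$ is $\mathbb{R}$-valued (as shown above), so that $G_{WH}^\fn$ is a bona fide group and $U$ a true, not merely projective, representation.
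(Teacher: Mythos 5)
Your proposal is correct and its core step is exactly the paper's: compute $[A_1,A_2]=(\qu_1\overline{\qu_2}-\overline{\qu_1}\qu_2)\mathbb{I}_\fn=2\fn(y_1x_2-x_1y_2)\mathbb{I}_\fn$, note it is central because the slice $\Cn$ is commutative, and apply Baker--Campbell--Hausdorff to obtain $D(\qu_1)D(\qu_2)=e^{\fn(y_1x_2-x_1y_2)}D(\qu_1+\qu_2)$, which is precisely the paper's conclusion that $D$ is a unitary representation up to a phase. Your extra packaging --- the explicit central extension $\mathbb{R}\times\Cn$ turning the projective representation into a true one (already implicit in the paper's equation (\ref{G8})) and the sketched irreducibility argument --- goes slightly beyond what the proposition asserts, but does not change the method.
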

\begin{proof}
Let $\qu_1=x_1+\fn y_1,~\qu_2=x_2+\fn y_2\in \Cn$ and $A_i=\qu_i a_\fn^{\dagger}-\overline{\qu}_ia_\fn;~~i=1,2.$
Since $\qu_1$ and $\qu_2$ commute, it can be easily checked that
$$[A_1,A_2]=-2\fn(x_1y_2-x_2y_1),$$
thereby $[A_1,A_2]$ commute with $A_1$ and $A_2$. Applying the Baker-Campbell-Hausdorff formula we get
\begin{eqnarray*}
D(\qu_1)D(\qu_2)&=&e^{A_1}e^{A_2}\\
&=&e^{\frac{1}{2}[A_1,A_2]}e^{A_1+A_2}\\
&=&e^{-\fn(x_1y_2-x_2y_1)}e^{(\qu_1+\qu_2)a^{\dagger}_\fn-(\overline{\qu}_1+\overline{\qu_2})a_{\fn}\fn}\\
&=&e^{-\fn(x_1y_2-x_2y_1)}e^{q_3a^{\dagger}_\fn-\overline{\qu}_3a_\fn};\quad\text{with}~~\qu_3=\qu_1+\qu_2\\
&=&e^{-\fn(x_1y_2-x_2y_1)}D(\qu_3)\\
&=&e^{-\fn\qu_1\wedge\qu_2}D(\qu_3).
\end{eqnarray*}
Therefore $D(\qu)$ is a unitary representation up to a phase factor.
\end{proof}

The irreducibility follows as in the complex case. Hence, $D(\qu)$ is a unitary irreducible representation of the representation space $V_{\fn}^L$.\\

In the following we shall employ a  transformation to transfer the operator, $D(\qu)$ and CS, $\gamma_{\qu}$ from $V_{\fn}^L$ to $V_{\fn'}^L$. In order to enhance the distinquishability let us rewrite the CS $\gamma_{\qu}$ as follows:
\begin{equation}\label{G10}
\gamma_{\qu}^\fn=e^{-\frac{r^2}{2}}\sum_{m=0}^{\infty}\frac{\left(re^{\fn\theta}\right)^m}{\sqrt{m!}}g_m^\fn
=D\left(re^{\fn\theta}\right)g_0^\fn\in V_{\fn}^L,
\end{equation}
where $\displaystyle\{g_m^\fn\}_{m=0}^{\infty}$ is an orthonormal basis of $V_{\fn}^L$.
Let $\mathcal{L}(V_\HQ^L)$ be the set of all linear operators on $V_\HQ^L$. Define the operator valued function,
\begin{equation}\label{G11}
F:\Ps\longrightarrow \mathcal{L}(V_\HQ^L)\quad\text{by}~~~F(\fn')=\vert\gamma_{\qu}^{\fn'}\rangle\langle\gamma_{\qu}^\fn\vert,
\end{equation}
where one should be clear with the notion that the $\qu$ in $\gamma_{\qu}^{\fn'}$ is in $\C_{\fn'}$ and the $\qu$ in $\gamma_{\qu}^\fn$ is in $\Cn.$ With this transformation, it is straightforward that
\begin{equation}\label{G12}
F(\fn')\gamma_{\qu}^\fn=\gamma_{\qu}^{\fn'}=D(re^{\fn'\theta})g_0^{\fn'}.
\end{equation}
Define a new operator $\mathfrak{D}(\qu):V_\HQ^L\longrightarrow V_\HQ^L$ as follows:
\begin{equation}\label{G13}
\mathfrak{D}(\qu)=\int_{\Ps}^{\oplus}D(re^{\fn\theta})d\nu(\fn).
\end{equation}
The decomposable operator $\mathfrak{D}(\qu)$ is a reducible representation of the representation space $V_\HQ^L$ and it acts on the CS, $\eta_{\qu}$ as
$$\mathfrak{D}(\qu)\eta_{\qu}=D(\qu)f_0\vert_{\fn}\quad\text{for}~~\qu\in \Cn;~~\fn\in\Ps.$$
It should be noted that the operator $\mathfrak{D}(\qu)$ is different from the operator in equation (\ref{G5}).

\section{Conclusion}
Using the CS constructed on a set of quaternionic Hilbert spaces we have obtained a measurable field of reproducing kernel Hilbert spaces and their direct integral. However, there is a converse approach, whenever we are given a measurable family of Hilbert spaces and a positive definite-kernel on it one can obtain a reproducing kernel and an associated Hilbert space and thereby a set of CS \cite{Alibk}. Further, the theory of positive operator valued functions (POVs) and positive operator valued measures (POVMs) are allied with a measurable family of Hilbert spaces \cite{HP, Alibk}. In fact, in complex quantum mechanics normalized POVMs are identified with quantum observables. In \cite{HP}, starting with a measurable field of Hilbert spaces the authors obtained extremal POVMs and in physical point of view extremal observables describe quantum measurements that are free from any classical randomness. In this regard, in the quaternion quantum mechanics, the converse approach and the role of POVs and POVMs, along the lines of \cite{HP, Alibk}, are yet to be seen.
\section{Appendix}
\subsection{Coherent states: General construction}
The scheme of the general construction is borrowed from \cite{ThiAli}. Let $X$ be any locally compact space with a (Radon) measure $\nu$ on it and $\Phi_m : X \longrightarrow H , \;\; m \in\mathbb{N} ,$ be a sequence of functions which satisfy the two conditions,
\begin{itemize}
\item[1.] $\displaystyle 0 < \mathcal N (x) := \sum_{m=0}^{\infty}\vert\Phi_m (x)\vert^2 < \infty$, for all $x \in X$.\\
\item[2.] $\displaystyle\int_X \Phi_m (x)\overline{\Phi_n (x)}\; d\nu (x) = \delta_{m n}$, for all $m$ and $n$.
\end{itemize}
\noindent
A family of coherent states
$\{\eta_{x}~\mid~x\in X\}\subseteq V_{H}^{L}$ can be defined to be the vectors,
\begin{equation}\label{eq9}
\eta_x = \mathcal N (x)^{- \frac{1}{2}} \sum_m \Phi_m (x) \phi_m\; ;
\end{equation}
where $\{\phi_{m}\}_{m=0}^{\infty}$ is an orthonormal basis of $V_{H}^{L}$.
Now by construction
$$\Vert \eta_x \Vert^2 =1, \mbox{~~for all~~} x\in X,$$
and satisfy the resolution of identity,
$$ \int_X\langle f \mid \eta_x \rangle \langle \eta_x \mid g \rangle \; \mathcal N (x)d\mu (x)\;
= \langle f \mid g\rangle \; , \qquad f,g \in V^L_H \; .$$
Moreover, taking $L^2_H (X, d\nu )$ to be a left quaternionic Hilbert space, the map
\begin{equation}
W: V^L_H \longrightarrow L^2_H (X, d\nu ), \quad \text{with} \quad  W f (x) = \mathcal N (x)^{\frac 12}\langle f\mid \eta_x\rangle_{V^L_H}
\label{CS-H_isom}
\end{equation}
 is a linear isometry onto a closed subspace
 $$ \mathfrak H_K^X := WV^L_H \subset  L^2_H (X,  d\nu ).$$
Moreover, the space $\mathfrak{H}_K^X$ is a reproducing kernel Hilbert space, with reproducing kernel
\begin{equation}\label{ee1}
K:X\times X\longrightarrow H,\quad K(y,x)=\left[\mathcal{N}(y)\mathcal{N}(x)\right]^{1/2}\langle\eta_y\vert\eta_x\rangle=\sum_{m}\Phi_m(y)\overline{\Phi_m(x)}.
\end{equation}

\end{document}